\documentclass[sn-mathphys,Numbered]{sn-jnl}

\usepackage{graphicx}%
\usepackage{caption}
\usepackage{multirow}%
\usepackage{amsmath,amssymb,amsfonts}%
\usepackage{amsthm,amssymb}%
\usepackage{mathrsfs}%
\usepackage[title]{appendix}%
\usepackage{xcolor}%
\usepackage{textcomp}%
\usepackage{manyfoot}%
\usepackage{booktabs}%
\usepackage{algorithm}%
\usepackage{algorithmicx}%
\usepackage{enumitem}
\usepackage{algpseudocode}%
\usepackage{listings}%
\usepackage{mathrsfs}
\usepackage{arydshln}
\usepackage{threeparttable}
\usepackage{xcolor}
\usepackage{bm}
\newcommand{\ket}[1]{\left|#1\right\rangle}

\theoremstyle{thmstyleone}%
\newtheorem{theorem}{Theorem}
\theoremstyle{thmstyletwo}%
\newtheorem{remark}{Remark}%
\newtheorem{lemma}{Lemma}

\theoremstyle{thmstylethree}%

\begin{document}

\title[Article Title]{Construction of a Class of Communication-Efficient Quantum Secret Sharing Schemes }


\author[1]{\fnm{Chenhao} \sur{Li}}\email{mouse011126@163.com}

\author*[1]{\fnm{Zhihui} \sur{Li}}\email{lizhihui@snnu.edu.cn}

\author[1]{\fnm{Jiansheng} \sur{Guo}}\email{guojs@snnu.edu.cn}

\author[1]{\fnm{Yixin} \sur{Chen}}\email{chenyixin1997@snnu.edu.cn}

\author[1]{\fnm{Yewei} \sur{Wang}}\email{18766217130@163.com}

\affil*[1]{\orgdiv{School of Mathematics and Statistics}, \orgname{Shaanxi Normal University}, \orgaddress{\city{Xi'an}, \postcode{710119}, \country{China}}}


\abstract{Quantum secret sharing is a fundamental technique in quantum cryptography. 
However, in practical quantum networks, it still faces several bottlenecks, 
such as high quantum communication cost and low transmission efficiency. 
To reduce the communication cost, ramp quantum secret sharing schemes have 
been proposed in existing studies. Nevertheless, intermediate sets in such 
schemes may leak partial information about the secret. To address this problem, this paper presents a method for detecting 
intermediate sets. Based on this method, we further propose a 
communication-efficient perfect quantum secret sharing scheme with 
eavesdropping detection capability.We analyze the communication cost under different numbers 
of participating parties and determine the range of participants that minimizes 
the reconstruction communication cost. The results verify the communication 
efficiency and security of the proposed scheme.
}

\keywords{Quantum secret sharing,   Communication cost,  Extended CSS codes,Intermediate sets}



\maketitle

\section{Introduction}\label{sec1}
Quantum secret sharing (QSS) is a class of quantum cryptographic protocols in which a quantum secret is encoded into a multipartite quantum state and distributed among several participants as quantum shares. The security of such protocols is rooted in fundamental quantum-mechanical constraints, including entanglement and the no-cloning theorem, which ensure that only authorized subsets can reconstruct the secret while unauthorized subsets obtain no useful information.
 Moreover, QSS is inherently 
sensitive to eavesdropping and tampering, thereby transforming the risk of 
single-point compromise in scenarios such as distributed key management, joint 
control, and secure computation into a security mechanism in which leakage is 
possible only through collusion among multiple parties.

According to the extent to which a participant set can recover the secret, quantum 
secret sharing schemes can be divided into two main categories: perfect quantum 
secret sharing schemes (PQSS)~\cite{ref1,ref18,ref20,ref21} and non-perfect quantum 
secret sharing schemes (Non-PQSS)~\cite{ref22,ref23,ref26,ref38}. A perfect quantum secret sharing scheme requires a strict separation between authorized and unauthorized participant sets. Specifically, the shares distributed by the dealer must enable any authorized set to recover the secret, while any unauthorized set remains information-theoretically independent of the secret. When this separation fails, namely when some non-authorized set can access a nonzero amount of information without being able to reconstruct the whole secret, the scheme becomes non-perfect, and such a set is commonly regarded as an intermediate set.

In quantum secret sharing, the shared secret may be either classical information 
encoded in quantum states~\cite{ref24,ref32,ref33} or a quantum state 
itself~\cite{ref2,ref36,ref37,ref39}. This paper focuses on QSS schemes for sharing 
quantum secrets. A QSS scheme for sharing quantum secrets generally consists of 
the following stages:
(1) Preparation and encoding of secret quantum state: The dealer prepares a secret quantum state and encodes it into 
$n$ quantum shares.
(2) Distribution of quantum shares: Each share is sent to the participants via quantum channels.
(3) Detection: Assess whether there is interference in the channels or from the participants by means of spot-checking, measurement, etc.
(4) Secret reconstruction and recovery: The authorized set gathers their shares to a combiner, who  performs decoding operations to recover the original secret quantum state.

In PQSS schemes for sharing quantum states, quantum threshold sharing schemes 
play a particularly important role. Such schemes are usually denoted by 
$\left(\left(t,n\right)\right)$-QTS. Let the set of participants be 
$P=\{P_1,P_2,\ldots,P_n\}$, where $P_1,P_2,\ldots,P_n$ may also be abbreviated as 
$\{1,2,\ldots,n\}$. In this setting, any subset of $P$ containing at least $t$ 
participants is a qualified set, whereas any subset containing fewer than $t$ 
participants is a forbidden set. In 1999, Cleve et al.~\cite{ref2} first pointed 
out that, in a $\left(\left(t,n\right)\right)$-QTS scheme, the size of each share, 
measured in qudits, cannot be smaller than the size of the secret itself. 
Consequently, when the combiner has to collect complete shares from $t$ 
participants in order to reconstruct the secret, the quantum communication cost 
in the reconstruction phase is at least $t$ times the size of the secret.

From 2019 to 2024, Senthoor et al. systematically investigated the communication 
cost of quantum secret sharing for quantum secrets~\cite{ref3,ref4,ref5}.
They defined the communication cost as the total amount of quantum information transmitted by authorized participants to the combiner during secret recovery, and used $CC_n(d)$ to denote the maximum communication cost over all authorized sets of size $d$. Based on this metric, Ref.~\cite{ref3} introduced communication-efficient quantum threshold secret sharing (CE-QTS), where the combiner contacts $d>t$ participants and downloads only partial shares from each participant. This significantly reduces the reconstruction cost compared with conventional threshold recovery. This idea was further generalized in several directions. Ref.~\cite{ref4} proposed a universal CE-QTS framework that allows a single threshold-sharing encoding to support different reconstruction sizes $d\ge t$. Ref.~\cite{ref5} established a unified information-theoretic framework for communication-efficient QTS, including lower bounds, achievability conditions, and comparisons for different authorized-set sizes. Other communication-efficient QSS constructions have also been proposed. Rui \emph{et al.}~\cite{ref9} constructed a communication-efficient $((t,n))$ QTS scheme using multipartite entangled states, where the corresponding unitary operations are characterized by invertible matrices over finite fields. Zhou \emph{et al.}~\cite{ref10} proposed an efficient $((t,n))$ PQSS scheme based on multipartite high-dimensional entangled states, in which the reconstruction communication cost decreases as the number $d$ of contacted authorized participants increases.

Since perfect quantum secret sharing schemes usually require relatively large 
quantum communication and storage costs, Non-PQSS schemes, by contrast, can 
achieve higher communication efficiency with smaller share sizes or lower 
reconstruction communication costs. Motivated by this idea, communication-efficient 
Non-PQSS schemes have attracted increasing attention in recent years~\cite{ref11,ref12,ref13,ref14}.
In a Non-PQSS scheme, there may exist participant sets that are neither qualified 
nor forbidden. Such sets are referred to as intermediate sets. Although the 
participants in an intermediate set cannot fully reconstruct the secret, they may 
still retrieve partial information about it. Therefore, the existence of 
intermediate sets may introduce potential security risks. More specifically, even 
though such sets are unable to recover the entire secret, they may extract partial information from the shares held by certain subsets of participants, thereby increasing the risk of secret leakage.

The security of intermediate sets in ramp-type quantum secret sharing, namely 
Non-PQSS, has been investigated from several perspectives. Zhang and 
Matsumoto~\cite{ref15} introduced the notion of strongly secure ramp quantum 
secret sharing, which requires that intermediate sets should not effectively 
obtain quantum information containing the essential content of the secret. 
Matsumoto~\cite{ref11} constructed ramp QSS schemes based on nested linear codes 
$C_2\subsetneq C_1\subseteq \mathbb{F}_q^n$ and provided criteria for determining 
qualified and forbidden sets. Subsequently, Matsumoto~\cite{ref14} studied ramp 
QSS from the viewpoint of quantum stabilizer codes, established necessary and 
sufficient conditions for qualified and forbidden sets, and quantified the amount 
of information that can be obtained by intermediate sets.

Ref.~\cite{ref6} constructed communication-efficient quantum secret sharing 
schemes based on extended CSS codes, covering both threshold and non-threshold 
settings. Inspired by the classical communication-efficient secret sharing 
constructions in Refs.~\cite{ref7,ref8}, this work employed generalized 
Reed--Solomon codes to construct CE-QTS schemes and further derived bounds on 
the communication cost of CE-QSS schemes in the secret reconstruction phase. 
In particular, the resulting CE-QTS schemes attain the optimal bounds in terms 
of both storage overhead and reconstruction communication cost.
In addition, Ref.~\cite{ref6} considered Non-PQSS schemes under the 
$\left((t,n;z)\right)$ access structure. Specifically, for 
$0\leq z<t\leq n$, any participant set $B\subseteq P$ with 
$\lvert B\rvert\geq t$ is a qualified set, whereas any participant set satisfying 
$\lvert B\rvert\leq z$ is a forbidden set. Under this access structure, apart 
from qualified and forbidden sets, there may exist participant sets whose sizes 
lie between these two regimes, namely sets satisfying 
$z<\lvert B\rvert<t$. Such sets are referred to as intermediate sets. They 
usually cannot fully reconstruct the secret, but may obtain partial information 
about it.
Based on this observation, Ref.~\cite{ref6} further incorporated the idea that 
contacting more participants can reduce the communication cost required for 
reconstruction. More precisely, a reconstruction parameter $d$ with 
$t\leq d\leq n$ was introduced into the $\left((t,n;z)\right)$ access structure, 
thereby yielding a $\left((t,n,d;z)\right)$ communication-efficient Non-PQSS 
scheme. It was shown that, when the combiner contacts $d$ participants, the 
communication cost satisfies $CC_n(d)<CC_n(t)$, and hence communication savings 
can be achieved in the secret reconstruction phase.

Motivated by the fact that, in the quantum secret sharing scheme of Ref.~\cite{ref6}, there exists a class of participant sets that belong to intermediate sets and may leak partial secret information, this paper investigates this issue in depth. The main contributions of this work are summarized as follows:
\begin{enumerate}[label={[\arabic*]}]
    \item 
   To address the potential information leakage caused by intermediate sets in 
existing communication-efficient QSS constructions based on extended CSS codes, 
this paper provides a sufficient condition, within the considered coding 
framework, for a class of participant sets to constitute intermediate sets. 
This condition can be used to analyze possible intermediate-set leakage under 
the $\left((t,n,d;z)\right)$ access structure, thereby providing a theoretical 
basis for the subsequent perfection procedure.

    \item 
    In the detection phase, we propose a stabilizer-based eavesdropping detection 
    mechanism compatible with extended CSS encoding. Specifically, during the 
    distribution phase, test code blocks are randomly selected, and either a 
    $Z$-test or an $X$-test is randomly performed. Each participant only needs to 
    carry out a single-qudit local measurement, and the measurement outcomes are 
    then classically aggregated to obtain the corresponding syndrome vector. A 
    decision is made according to the relationship between the empirical error rate 
    and a prescribed threshold. In this way, quantifiable and decidable detection of 
    $X$-type or $Z$-type disturbances can be achieved without relying on multi-qudit 
    joint measurements.

    \item 
    For a $\left((t,n,d;z)\right)$ non-perfect quantum secret sharing scheme containing 
intermediate sets, this paper introduces a classical $(t,n)$ threshold secret 
sharing mechanism to improve the original construction. By combining the 
potentially leaked information with an additional classical threshold-sharing 
layer, the ability of intermediate sets to obtain partial information about the 
quantum secret can be effectively eliminated. Consequently, this type of 
non-perfect quantum secret sharing scheme can be transformed into a perfect 
quantum secret sharing scheme.

    \item
    This paper also establishes a more comprehensive performance evaluation 
framework. In addition to analyzing the communication cost in the secret 
reconstruction phase, we further distinguish and quantify the communication 
costs in the distribution, detection, and reconstruction phases. This enables a 
comprehensive evaluation of the overall communication efficiency of the proposed 
scheme and provides a basis for unified comparison among different schemes.
\end{enumerate}

\section{Preliminaries}

This section fixes the notation and recalls the coding-theoretic concepts used
throughout the paper. These conventions will be used in the construction of the
extended CSS-based QSS scheme, the characterization of intermediate sets, and
the subsequent communication-cost analysis.

\subsection{Notation}

The main symbols used in this paper are collected in Table~\ref{tab1}. Unless
otherwise specified, all algebraic operations over vectors and matrices are
performed over the finite field $F_q$.

\begin{table}[h]
\centering
\caption{Notation used in this paper}\label{tab1}%
\begin{tabular}{cc}
\toprule
Symbol & Meaning\\
\midrule
$F_q$  
& Finite field with $q$ elements, where $q$ is an odd prime \\

$\{ |x\rangle : x\in F_q \}$    
& Computational basis of the single-qudit Hilbert space $\mathbb{C}^q$ \\

$|\underline{x}\rangle$, where 
$\underline{x}=(x_1,x_2,\ldots,x_n)\in F_q^n$   
& Tensor-product basis state $|x_1x_2\cdots x_n\rangle$ \\

$\{|\underline{x}\rangle : \underline{x}\in F_q^n\}$  
& Computational basis of the $n$-qudit space $\mathbb{C}^{q^n}$ \\

$[n]$  
& Index set $\{1,2,\ldots,n\}$ \\

$[i,j]$  
& Consecutive index set $\{i,i+1,\ldots,j\}$ \\

$V\in F_q^{m\times n}$  
& An $m\times n$ matrix whose entries lie in $F_q$ \\

$V_A$, where $A\subseteq [m]$  
& Submatrix obtained from $V$ by retaining the rows indexed by $A$ \\

$V^B$, where $B\subseteq [n]$ 
& Submatrix obtained from $V$ by retaining the columns indexed by $B$ \\

$V_A^B$, where $A\subseteq [m]$ and $B\subseteq [n]$ 
& Submatrix obtained by restricting $V$ to the rows in $A$ and the columns in $B$ \\
\botrule
\end{tabular}
\end{table}

\subsection{ communication cost}\label{subsect2}
\textbf{Definition 1.} According to the literature \cite{ref6},in a \(((t,n;z))\) QSS scheme, the communication cost associated with an authorized set \(A \subseteq [n]\) is defined as
\[
CC_n(A)=\sum_{j\in A} h_{j,A},
\]
where \(h_{j,A}\) denotes the number of particles sent by participant \(j\) to the combiner when the secret is reconstructed from the authorized set \(A\).

Here, we assume that, for a given authorized set, the portion of each accessed share that must be transmitted to the combiner is fixed in advance. This assumption is necessary because, for the same authorized set, there may exist different ways of partitioning the shares to enable secret recovery. Under this assumption, the communication cost of a QSS scheme can be characterized in a unified and unambiguous manner.

Accordingly, in a \(((t,n;z))\) QSS scheme, the communication cost for secret recovery from any \(d\) participants \((t \le d \le n)\) is defined as
\[
CC_n(d)=\max_{A\subseteq [n],\, |A|=d} CC_n(A).
\]

\subsection{CSS Codes and ECSS Codes}

For a linear code \(C\), the minimum distance is defined as
\[
wt(C)=\min\{wt(c)\mid c\in C,\; c\neq 0\}.
\]

Furthermore, for two linear codes \(C_1\) and \(C_0\) satisfying \(C_1\subseteq C_0\), the minimum distance of the nested code pair \((C_0,C_1)\) is defined as
\[
wt(C_0\setminus C_1)=\min\{wt(c)\mid c\in C_0,\; c\notin C_1\}.
\]

The CE-QSS construction used in this paper is built upon CSS codes and their extended structures. Therefore, we first briefly review the standard construction of Calderbank--Shor--Steane (CSS) codes, as well as their application to CE-QSS by Senthoor and Sarvepalli~\cite{ref6}.

\subsubsection{CSS Codes}

Let \(C_1\subseteq C_0\subseteq \mathbb{F}_q^n\) be two linear codes, and let \(\dim(C_i)=k_i\) for \(i\in\{0,1\}\). Since \(C_1\subseteq C_0\), the generator matrix of \(C_0\) may be written in the block form
\[
G_{C_0}=
\begin{bmatrix}
G_{C_0\setminus C_1}\\
G_{C_1}
\end{bmatrix},
\]
where \(G_{C_0\setminus C_1}\) generates a complementary subspace of \(C_1\) in \(C_0\).

For any logical quantum state \(|s\rangle\in \mathbb{F}_q^{\,k_0-k_1}\), the CSS encoding defined by the nested code pair \((C_0,C_1)\) can be written as
\[
|s\rangle \longmapsto \sum_{r\in \mathbb{F}_q^{k_1}}
\left|
\begin{bmatrix}
G_{C_0\setminus C_1}^{\,T} & G_{C_1}^{\,T}
\end{bmatrix}
\begin{bmatrix}
s\\
r
\end{bmatrix}
\right\rangle
=
\sum_{r\in \mathbb{F}_q^{k_1}}
\left|
G_{C_0}^{\,T}
\begin{bmatrix}
s\\
r
\end{bmatrix}
\right\rangle.
\]

Therefore, $\mathrm{CSS}(C_0, C_1)$ defines a quantum code with parameters
$\left[\!\left[n, k_0 - k_1\right]\!\right]_q$, whose minimum distance is
$\delta = \min \left\{ wt\!\left(C_0 \setminus C_1\right),\; wt\!\left(C_1^{\perp} \setminus C_0^{\perp}\right) \right\}.$From the perspective of quantum secret sharing, a quantum code with distance $\delta$
can correct up to $\delta - 1$ erasure errors. Accordingly, the QSS scheme induced by
this CSS code allows the secret to be recovered by any set of at least $n - \delta + 1$
participants, whereas any set of at most $\delta - 1$ participants can obtain no information about the secret.

\subsubsection{Construction of extended CSS codes (ECSS)}

  Let $G_E \in \mathbb{F}_q^{e \times n}$, and denote its row space by
$E \subseteq \mathbb{F}_q^n$. Following the construction of Senthoor and
Sarvepalli~[6], we require that $C_1 \subseteq C_0$ and $C_0 \cap E = \{0\}$.

The extended CSS code is defined as the CSS code induced by a pair of extended
codes $(D_0, D_1)$, whose generator matrices are given by
\[
G_{D_0} =
\begin{bmatrix}
G_{C_0} & 0 \\
G_E & I_e
\end{bmatrix},
\qquad
G_{D_1} =
\begin{bmatrix}
G_{C_1} & 0 \\
G_E & I_e
\end{bmatrix}.
\]

For the above definition of the extended CSS code, the matrix $G_E$ is not
required to be full rank. This code induces a QSS scheme with $n+e$ qudits.

The encoding of $\mathrm{ECSS}(C_0, C_1, G_E)$ can be written as
\[
|s\rangle \mapsto
\sum_{\substack{r_1 \in \mathbb{F}_q^{k_1} \\ r_2 \in \mathbb{F}_q^{e}}}
\left|
\begin{bmatrix}
G_{C_0 \setminus C_1}^{T} & G_{C_1}^{T} & G_E^{T} \\
0 & 0 & I_e
\end{bmatrix}
\begin{bmatrix}
s \\
r_1 \\
r_2
\end{bmatrix}
\right\rangle
=
\sum_{\substack{r_1 \in \mathbb{F}_q^{k_1} \\ r_2 \in \mathbb{F}_q^{e}}}
\left|
\begin{bmatrix}
G_{C_0 \setminus C_1}^{T} & G_{C_1}^{T} & G_E^{T}
\end{bmatrix}
\begin{bmatrix}
s \\
r_1 \\
r_2
\end{bmatrix}
\right\rangle
\otimes |r_2\rangle .
\]
\subsubsection{Stabilizers of CSS codes and their detection\cite{ref16}\cite{ref17}}
The parity-check matrix of the CSS code is defined as
\begin{equation}
\begin{bmatrix}
H(C_2^{\perp}) & 0 \\
0 & H(C_1)
\end{bmatrix}.
\label{eq.1}
\end{equation}

Let
$H_Z = H(C_2^{\perp}) = \bigl(h_{rj}^{(Z)}\bigr) \in \mathbb{Z}_q^{\,r_Z \times n},
\qquad
H_X = H(C_1) = \bigl(h_{rj}^{(X)}\bigr) \in \mathbb{Z}_q^{\,r_X \times n},$
which denote the $Z$-type and $X$-type parity-check matrices, respectively.
Then the corresponding stabilizer generators are given by

\begin{equation}
g_{Z,r}=\bigotimes_{j=1}^{n} Z_j^{\,h_{rj}^{(Z)}},\quad r=1,\ldots,r_Z,
\qquad
g_{X,r}=\bigotimes_{j=1}^{n} X_j^{\,h_{rj}^{(X)}},\quad r=1,\ldots,r_X.
\label{eq.2}
\end{equation}
For example, for the seven-qubit Steane code, the parity-check matrix given by Eq. (\ref{eq.1}) is
\[
H=
\left(
\begin{array}{ccccccc|ccccccc}
0 & 0 & 0 & 1 & 1 & 1 & 1 & 0 & 0 & 0 & 0 & 0 & 0 & 0 \\
0 & 1 & 1 & 0 & 0 & 1 & 1 & 0 & 0 & 0 & 0 & 0 & 0 & 0 \\
1 & 0 & 1 & 0 & 1 & 0 & 1 & 0 & 0 & 0 & 0 & 0 & 0 & 0 \\
0 & 0 & 0 & 0 & 0 & 0 & 0 & 0 & 0 & 0 & 1 & 1 & 1 & 1 \\
0 & 0 & 0 & 0 & 0 & 0 & 0 & 0 & 1 & 1 & 0 & 0 & 1 & 1 \\
0 & 0 & 0 & 0 & 0 & 0 & 0 & 1 & 0 & 1 & 0 & 1 & 0 & 1
\end{array}
\right).
\]
The corresponding stabilizer generators are shown in Table~\ref{tab:steane_stabilizers}.

\begin{table}[htbp]
\centering
\caption{Stabilizer generators corresponding to the seven-qubit Steane code}
\label{tab:steane_stabilizers}
\begin{tabular}{c|ccccccc}
\hline
\textbf{Operator } & \textbf{1} & \textbf{2} & \textbf{3} & \textbf{4} & \textbf{5} & \textbf{6} & \textbf{7} \\
\hline
$g_1$ & $I$ & $I$ & $I$ & $X$ & $X$ & $X$ & $X$ \\
$g_2$ & $I$ & $X$ & $X$ & $I$ & $I$ & $X$ & $X$ \\
$g_3$ & $X$ & $I$ & $X$ & $I$ & $X$ & $I$ & $X$ \\
$g_4$ & $I$ & $I$ & $I$ & $Z$ & $Z$ & $Z$ & $Z$ \\
$g_5$ & $I$ & $Z$ & $Z$ & $I$ & $I$ & $Z$ & $Z$ \\
$g_6$ & $Z$ & $I$ & $Z$ & $I$ & $Z$ & $I$ & $Z$ \\
\hline
\end{tabular}
\vspace{0.3em}
\small
Remark: Each row represents one group element, and each column indicates the action of the Pauli operators $I$, $X$, and $Z$ on the corresponding one of the seven qubits.
\end{table}
Let $|\psi\rangle$ be a quantum state in the code space, and let $g_{Z,r}$ and
$g_{X,s}$ denote the $Z$-type and $X$-type stabilizers of the code space,
respectively. Then, for all $r$ and $s$, one has
\[
g_{Z,r}|\psi\rangle = |\psi\rangle, \qquad
g_{X,s}|\psi\rangle = |\psi\rangle.
\]

Since the ECSS code is also a special form of a CSS code, one can construct the
parity-check matrices of the ECSS code and thereby derive its stabilizer generators.

If the local measurement outcomes in the computational basis form a vector
$\bm{x} = (x_1,\ldots,x_n) \in \mathbb{Z}_q^n,$
then the syndrome vector with respect to the $Z$-type parity-check matrix $H_Z$
is defined by $\bm{s}_Z = H_Z \bm{x}^{T},$
whose $r$th component is
\[
s_{Z,r} = \sum_{j=1}^{n} h_{rj}^{(Z)} x_j \pmod q.
\]

If the local measurement outcomes in the Fourier basis form a vector
$\bm{k} = (k_1,\ldots,k_n) \in \mathbb{Z}_q^n,$
then the syndrome vector with respect to the $X$-type parity-check matrix $H_X$
is defined by
$\bm{s}_X = - H_X \bm{k}^{T} \pmod q,$
whose $r$th component is
\[
s_{X,r} = - \sum_{j=1}^{n} h_{rj}^{(X)} k_j \pmod q.
\]
Therefore, for ideal error-free codeword test states, the syndromes satisfy
$\bm{s}_Z=\bm{0}$ and $\bm{s}_X=\bm{0}$. Hence, checking whether the
aggregated syndrome vector vanishes provides a criterion for verifying the
stabilizer constraints, and thus serves as a key indicator for detecting the
presence of disturbances.

\subsubsection{Construction Stage Based on the ECSS Quantum Circuit}
The quantum circuit for the ECSS code is introduced below; see
Fig.~\ref{fig:ecss-encoding-circuit}.

\begin{figure}[htbp]
    \centering
    \includegraphics[width=0.7\textwidth]{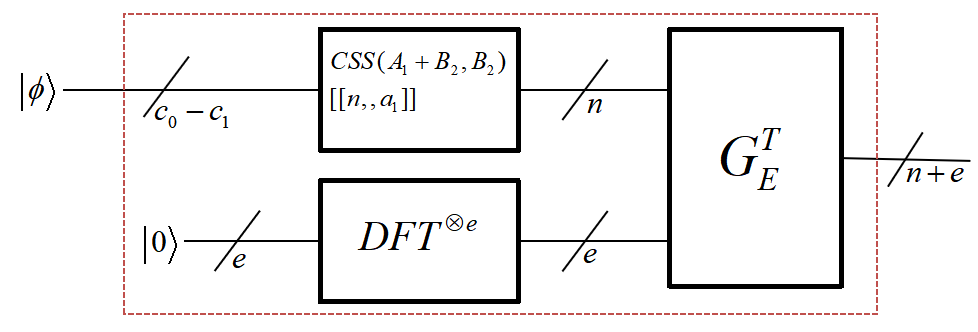}
    \caption{Quantum circuit structure of the ECSS code.}
    \label{fig:ecss-encoding-circuit}
\end{figure}

According to the definition in Section~2.3.2, the ECSS encoding can be divided
into three steps.

In the first step, the secret quantum state $|s\rangle$ is encoded according to $\mathrm{CSS}(C_0,C_1)$,
\[
|s\rangle \mapsto
\sum_{r_1 \in \mathbb{F}_q^{k_1}}
\left|
\begin{bmatrix}
G_{C_0 \setminus C_1}^{T} & G_{C_1}^{T}
\end{bmatrix}
\begin{bmatrix}
s \\
r_1
\end{bmatrix}
\right\rangle
=
\sum_{r_1 \in \mathbb{F}_q^{k_1}}
\left|
G_{C_0}^{T}
\begin{bmatrix}
s \\
r_1
\end{bmatrix}
\right\rangle .
\]

In the second step, ancilla qudits are introduced, and the operator $DFT^{\otimes e}$ is applied to their initial state $|0^{e}\rangle$, yielding the uniform superposition
\[
DFT^{\otimes e}\,|0^{e}\rangle
={\otimes_1 ^e}
\frac{1}{\sqrt{q^{e}}}
\sum_{r_2 \in \mathbb{F}_q^{e}} |r_2\rangle .
\]

In the third step, an invertible linear transformation implemented by $G_E^{T}$ is applied between the first encoded registers and the ancilla registers. Its action is given by
\[
|x\rangle |r_2\rangle \mapsto |x + G_E^{T}r_2\rangle \otimes |r_2\rangle,
\qquad
x \in \mathbb{F}_q^{n},\; r_2 \in \mathbb{F}_q^{e}.
\]

Ignoring the normalization factor, the encoding of the logical state $|s\rangle$ under $\mathrm{ECSS}(C_0,C_1,G_E)$ can therefore be written as
\[
\mathrm{ECSS}(C_0,C_1,G_E)\,|s\rangle
=
\sum_{\substack{r_1 \in \mathbb{F}_q^{\,k_1} \\ r_2 \in \mathbb{F}_q^{\,e}}}
\left|
\begin{bmatrix}
G_{C_0 \setminus C_1}^{T} & G_{C_1}^{T} & G_E^{T} \\
0 & 0 & I_e
\end{bmatrix}
\begin{bmatrix}
s \\
r_1 \\
r_2
\end{bmatrix}
\right\rangle .
\tag{3}
\label{eq.3}
\]

\subsection{Disentangling Effect of the ADD Gate}

In a $q$-dimensional quantum system, the secret recovery process requires the use of the generalized controlled-addition gate $\mathrm{ADD}$, defined by
\[
\mathrm{ADD}\,|i\rangle_c |j\rangle_t = |i\rangle_c |j+i\rangle_t,
\qquad i,j \in \mathbb{F}_q,
\]
where the addition is performed over $\mathbb{F}_q$. Its inverse gate is denoted by $\mathrm{ADD}^{\dagger}$, satisfying
\[
\mathrm{ADD}^{\dagger}\,|i\rangle_c |j\rangle_t = |i\rangle_c |j-i\rangle_t.
\]

We next illustrate the roles of $\mathrm{ADD}$ and $\mathrm{ADD}^{\dagger}$ in secret recovery through a simple example.

\medskip
\noindent\textbf{Example 1.} Consider the initial encoded state
\begin{equation}
|s_1,s_2\rangle
\mapsto
\sum_{r_1,r_2 \in \mathbb{F}_3}
|s_1+r_1,\,r_2\rangle_1
\otimes
|s_2+r_1,\,r_1+r_2\rangle_2
\otimes
|s_1+s_2+2r_1,\,r_1+r_2\rangle_3 .
\tag{4}
\label{eq.4}
\end{equation}

In the tensor-product state in Eq.~(\ref{eq.4}), the subscripts $1$, $2$, and $3$ attached 
to the factors indicate that the corresponding quantum registers are held by 
participants $P_1$, $P_2$, and $P_3$, respectively. Clearly, each participant holds 
two layers of share information. For example, the first-layer quantum share held 
by $P_1$ is $\left|s_1+r_1\right\rangle$, while the second-layer quantum share held 
by $P_1$ is $\left|r_2\right\rangle$.

First, we apply the inverse controlled-addition operation $\mathrm{ADD}^{\dagger}$ 
to the registers $\left|r_2\right\rangle_1$ and 
$\left|r_1+r_2\right\rangle_2$ in Eq.~(\ref{eq.4}), so as to eliminate the offset introduced 
by $r_2$ in the second-layer particle held by $P_2$, namely 
$\left|r_1+r_2\right\rangle_2$. This gives
\begin{equation}
\sum_{r_1,r_2 \in \mathbb{F}_3}
|s_1+r_1,\,r_2\rangle_1
\otimes
|s_2+r_1,\,r_1\rangle_2
\otimes
|s_1+s_2+2r_1,\,r_1+r_2\rangle_3 .
\tag{5}
\label{eq.5}
\end{equation}

Next, we further apply the inverse controlled-addition operation 
$\mathrm{ADD}^{\dagger}$ to eliminate the offsets introduced by $r_1$ in the 
registers held by $P_1$ and $P_2$ in Eq.~(\ref{eq.5}), namely 
$\left|s_1+r_1\right\rangle_1$ and $\left|s_2+r_1\right\rangle_2$. 
Then Eq.~(\ref{eq.5}) can be transformed into
\begin{equation}
\sum_{r_1,r_2 \in \mathbb{F}_3}
|s_1,\,r_2\rangle_1
\otimes
|s_2,\,r_1\rangle_2
\otimes
|s_1+s_2+2r_1,\,r_1+r_2\rangle_3 .
\tag{6}
\label{eq.6}
\end{equation}

Although the local registers of $P_1$ and $P_2$ now appear in the form
$|s_1\rangle$ and $|s_2\rangle$, the overall system is still not fully
disentangled, because the auxiliary registers still contain the secret-dependent
term $s_1+s_2+2r_1$. If the auxiliary subsystem were measured at this stage, the
global state would collapse, thereby destroying the coherence of an arbitrary
superposition secret. Therefore, proper secret reconstruction requires the final
state to have a tensor-product form
\[
|\Psi\rangle = \bigl(|s_1\rangle_1 |s_2\rangle_2\bigr) \otimes |\phi\rangle,
\]
where the auxiliary state $|\phi\rangle$ no longer depends on the secret.

To achieve this, one continues to apply a sequence of $\mathrm{ADD}$ and
$\mathrm{ADD}^{\dagger}$ gates to eliminate the remaining random terms through
reversible linear elimination. The final state becomes
\begin{equation}
|s_1\rangle_1 |s_2\rangle_2
\otimes
\sum_{t_1,t_2\in\mathbb{F}_3}
|t_1\rangle_1 |t_2\rangle_2 |t_1\rangle_3 |t_2\rangle_3 .
\tag{7}
\label{eq.7}
\end{equation}
where $t_1 = s_1 + s_2 + 2r_1, \qquad t_2 = r_1 + r_2.$

At this point, the secret component $|s_1\rangle_1 |s_2\rangle_2$ in Eq.(\ref{eq.7}) has been completely disentangled from the auxiliary registers. Hence, by successively canceling the translation terms introduced by the random variables, the recovery process disentangles the secret registers from the auxiliary registers and ensures coherent recovery for an arbitrary superposition secret, thereby achieving correct quantum secret reconstruction.

\section{Detection of intermediate sets	}\label{sec3}
In this section, we investigate the communication-efficient quantum secret sharing construction based on generalized Reed-Solomon codes given in Corollary 4 of ~\cite{ref6} . On this basis, we analyze the leakage phenomenon caused by intermediate sets and provide a sufficient criterion for their identification.

\subsection{An example demonstrating the existence of an intermediate set}
We first present an explicit example of an intermediate set. According to Corollary~4 in~ ~\cite{ref6}, let $(t,n,d,z)=(4,6,5,2).$ We will show that, in this scheme, the participant set
$J=\{P_1,P_2,P_4\}$
forms an intermediate set and hence leaks partial secret information.

\medskip
\noindent\textbf{Example 2.} The first-layer particles obtained by the combiner are
\begin{align*}
    \sum_{r\in \mathbb{F}_7^{4}}
    |s_1+s_3+s_5+r_1+r_3\rangle
    |s_1+2s_3+4s_5+r_1+2r_3\rangle
    |s_1+4s_3+2s_5+r_1+4r_3\rangle\\
    |s_2+s_4+s_6+r_2+r_4\rangle
    |s_2+2s_4+4s_6+r_2+2r_4\rangle
    |s_2+4s_4+2s_6+r_2+4r_4\rangle 
\tag{8}
\label{eq.8}
\end{align*}

and the second-layer particles are
\begin{equation}
\sum_{r\in \mathbb{F}_7^{4}}
|r_3+r_4+r_5+r_6\rangle
|2r_3+4r_4+r_5+2r_6\rangle
|4r_3+2r_4+r_5+4r_6\rangle .
\tag{9}
\label{eq.9}
\end{equation}

After applying multiple ADD-type operations, the first-layer quantum state can be transformed into
\[
\sum_{r\in \mathbb{F}_7^{4}}
|s_2+s_4+s_6+r_2\rangle
|s_4\rangle
|6s_6\rangle .
\]

At this stage, the quantum state already reveals partial secret information, namely $|s_4\rangle |6s_6\rangle .$ Such information may be further exploited to infer secret-related content contained in the first-layer shares. Therefore, it is evident that
$J=\{P_1,P_2,P_4\}$
constitutes an intermediate set. We next present a criterion for identifying such intermediate sets.

\subsection{Criterion for Identifying Intermediate Sets}

In the ECSS encoding considered in this section, the share held by each participant
can be expressed as a linear combination of the secret variables and the random
variables.

\begin{theorem}
    For any participant set $B$, after applying the ECSS encoding in Eq.(\ref{eq.3}), its joint
share can be written as
$\bm{y}_B = M_B
\begin{bmatrix}
\bm{s} \\
\bm{r}
\end{bmatrix},$
where $\bm{s}$ denotes the secret vector and $\bm{r}$ denotes the random vector.
Write
\[
M_B=\left[M_B^{(s)} \mid M_B^{(r)}\right],
\]
where $M_B^{(s)}$ and $M_B^{(r)}$ are the coefficient matrices corresponding to the
secret part and the random part, respectively. If there exists an invertible matrix
$T$ such that some row of $T M_B$ has the form
$(0,\ldots,0,$a$,0,\ldots,0 \mid 0,\ldots,0),
where \ $a$\neq 0,$
then the set $B$ is an intermediate set.
\label{thm1}
\end{theorem}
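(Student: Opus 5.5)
The plan is to argue that $B$ is neither forbidden nor qualified. By definition, an intermediate set is a non-qualified set that nevertheless obtains nonzero information about the secret. So it suffices to prove two things:
\begin{enumerate}[label=(\roman*)]
    \item $B$ can extract a secret-dependent quantity that is completely decoupled from the randomness, so $B$ is not forbidden.
    \item $B$ cannot recover the whole secret, so $B$ is not qualified.
\end{enumerate}

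For part (i), I work at the level of the computational-basis labels. Since $T$ is invertible over $F_q$, the map $\bm{y}_B\mapsto T\bm{y}_B$ is a permutation of basis labels. It can be realised reversibly on $B$'s registers by a sequence of $\mathrm{ADD}$/$\mathrm{ADD}^{\dagger}$ gates together with single-qudit multiplication gates, exactly as in the row-reduction in Example~1 and Example~2. After this unitary, $B$'s joint state is
\[
\sum_{\bm{r}}\bigl|T M_B[\bm{s};\bm{r}]\bigr\rangle .
\]
By hypothesis, one coordinate of this label equals $a s_i$ and does not depend on $\bm{r}$, so that register carries $|a s_i\rangle$ for every term in the superposition. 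Measuring it (or tracing out the rest) gives a reduced state that depends on $s_i$. I would verify this by writing the reduced density matrix of $B$ for two basis secrets differing only in $s_i$. Their supports lie in the orthogonal subspaces labelled $a s_i$ and $a s_i'$; these are distinct because $a\neq0$ and $q$ is prime. Hence the states are perfectly distinguishable, and $B$ is not forbidden. This is precisely the phenomenon exhibited by $|s_4\rangle|6s_6\rangle$ in Example~2.

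Part (ii) is the main obstacle, because the hypothesis on $T M_B$ by itself says nothing about whether all of $\bm{s}$ can be recovered. I would invoke the context in which the theorem is applied, namely sets $B$ with $z<|B|<t$ in the $((t,n,d;z))$ scheme of Corollary~4 of \cite{ref6}. For such sets, non-qualification follows from the threshold property of the underlying ECSS construction. As a consistency check, I would also use the quantum no-cloning argument: the complement $P\setminus B$ has size at least $n-t+1$. If the set $B$ with $|B|<t$ recovered the full quantum secret, then applying the same argument to the complement would produce two copies, contradicting no-cloning. Where this is not available, I would state explicitly that $B$ is assumed unqualified, with the theorem asserting that it then leaks. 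Combining (i) and (ii), $B$ is neither forbidden nor qualified, hence it is an intermediate set.
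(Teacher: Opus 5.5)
Your part (i) is the paper's proof. In fact it is the whole of it: realize $T$ by reversible gates on $B$'s registers, note that the register for the distinguished row carries $|a s_i\rangle$ independently of $\bm r$, and use that $a$ is invertible. Your version is slightly more careful in two ways. You include single-qudit multiplication gates, which covers a general invertible $T$, since $\mathrm{ADD}^{\pm1}$ alone realize only determinant-one matrices. You also make explicit that reduced states for basis secrets with different $s_i$ have orthogonal supports.

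The paper never addresses your part (ii); it tacitly treats ``leaks some information'' as ``intermediate''. You are right that (ii) cannot follow from the hypothesis, because every qualified set satisfies it. For example, in Case~2 of the reconstruction, applying $U_{V_3^{-1}}$ to the first-layer qudits of $d$ participants produces registers holding the entries of $S$. Those are rows of the form $(0,\ldots,0,1,0,\ldots,0\mid 0,\ldots,0)$. So the statement is only true with ``$B$ is not qualified'' added, and your fallback supplies exactly that.

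However, neither justification you give for (ii) works.
\begin{itemize}
\item The $((t,n;z))$ access structure only says that sets of size $\ge t$ are qualified and sets of size $\le z$ are forbidden. It is silent on $z<|B|<t$; the paper itself only says such sets ``usually'' cannot reconstruct.
\item The no-cloning check fails. We have $|P\setminus B|\ge n-t+1=z+1$, and this is $\ge t$ only when $z=t-1$. In the paper's $((4,6,5;2))_7$ example, $B=\{1,2,4\}$ and its complement $\{3,5,6\}$ both have size $3<t=4$, so nothing forces a second copy.
\end{itemize}

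A correct route to (ii) for $z<|B|<t$ is the quantum share-size bound: if $F$ is forbidden and $F\cup C$ is qualified, then $C$ must hold at least as many qudits as the secret. To see this, take a reference $R$ maximally entangled with the secret. Since $I(R:F)=0$, we get $2m\log q=I(R:FC)=I(R:C\mid F)\le 2S(C)$.

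Now apply it. Take $F\subseteq B$ with $|F|=z$, which is forbidden, and set $C=B\setminus F$. Let $g=\gcd(a_2,a_1-a_2)$. Each share has $v_1+v_2=a_1/g$ qudits, so $C$ holds $(|B|-z)a_1/g$ qudits. This is strictly fewer than $m=a_1v_1=(t-z)a_1/g$ whenever $|B|<t$, so $B$ cannot be qualified. With this argument, or with the explicit extra hypothesis, your proof is complete.
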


\begin{proof}
Since the generalized $\mathrm{ADD}$ gate and its inverse correspond, at the
coefficient level, to invertible linear transformations, the application of a
sequence of local reversible operations to the joint share transforms the
corresponding coefficient matrix into $T M_B$.
If one row of $T M_B$ is of the form
$(0,\ldots,0,a,0,\ldots,0 \mid 0,\ldots,0),$
where $a\neq 0$, then the quantum register corresponding to this row depends
only on the secret component $a s_i$ and is independent of all random variables.
Since all operations are performed over the finite field $\mathbb{F}_q$ and
$a\neq 0$, the coefficient $a$ is invertible in $\mathbb{F}_q$. Therefore, this
register contains nontrivial information about the secret component $s_i$.
Consequently, the participant set $B$ is an intermediate set.
\end{proof}

Since the generalized $\mathrm{ADD}$ gate and its inverse correspond, at the
coefficient level, to reversible elementary row operations, one can determine
whether a participant set $B$ leaks secret information by performing reversible
linear elimination on the coefficient matrix $M_B$. To illustrate this procedure,
we rewrite Example~2 in matrix form.

Accordingly, the first-layer particles in Eq.(\ref{eq.8}) can be rewritten as
\begin{equation}
\sum_{r\in\mathbb{F}_7^4}
\begin{bmatrix}
1 & 1 & 1 & 1 & 1 \\
1 & 2 & 4 & 1 & 2 \\
1 & 4 & 2 & 1 & 4
\end{bmatrix}
\begin{bmatrix}
s_1 & s_2 \\
s_3 & s_4 \\
s_5 & s_6 \\
r_1 & r_2 \\
r_3 & r_4
\end{bmatrix}
\tag{10}
\label{eq.10}
\end{equation}

Similarly, the second-layer particles in Eq.(\ref{eq.9}) can be rewritten as
\[
\sum_{r\in\mathbb{F}_7^4}
\begin{bmatrix}
1 & 1 & 1 & 1 \\
2 & 4 & 1 & 2 \\
4 & 2 & 1 & 4
\end{bmatrix}
\begin{bmatrix}
r_3 \\
r_4 \\
r_5 \\
r_6
\end{bmatrix}
\]

After applying several $\mathrm{ADD}$ and $\mathrm{ADD}^{\dagger}$ gates to the
second-layer particles, reversible linear elimination can be performed on the
random-variable part. The second layer can then be transformed into the equivalent
form
\[
\sum_{r\in\mathbb{F}_7^4}
\begin{bmatrix}
1 & 1 & 1 & 1 \\
0 & 2 & 0 & 0 \\
0 & 0 & 3 & 0
\end{bmatrix}
\begin{bmatrix}
r_3 \\
r_4 \\
r_5 \\
r_6
\end{bmatrix}
\]

Through such joint operations, one can separate from the second-layer shares
independent linear combinations involving the random variables $r_4$ and $r_5$.
Furthermore, by using the obtained state $|2r_4\rangle$ from the second layer and
applying the corresponding $\mathrm{ADD}$ operations to eliminate the associated
random terms in Eq.(~\ref{eq.10}), one obtains
\[
\sum_{r\in\mathbb{F}_7^4}
\begin{bmatrix}
1 & 1 & 1 & 1 & 0 \\
0 & 1 & 0 & 0 & 0 \\
0 & 0 & 6 & 0 & 0
\end{bmatrix}
\begin{bmatrix}
s_2 \\
s_4 \\
s_6 \\
r_2 \\
r_4
\end{bmatrix}
\]
Although the set $J=\{1,2,4\}$ is insufficient to reconstruct the entire secret,
its joint state allows one to explicitly extract linear information about the
secret components $s_4$ and $s_6$. In other words, this set is not a completely
uninformed unauthorized set, but rather a typical intermediate set. This motivates
the following detection criterion.

\subsection{Intermediate-Set Detection Theorem}

To analyze the leakage caused by intermediate sets in the scheme corresponding
to Corollary~4 of Ref.\cite{ref6}, we represent the quantum shares held by the participants
as linear combinations of the secret variables and the random variables, and then
perform elimination on the relevant coefficient matrices by exploiting the
invertible linear transformations induced by the $\mathrm{ADD}$ gate and its inverse.

For convenience, we first define the complete homogeneous symmetric polynomial.
For $s\geq 0$,
\[
h_s(y_0,\ldots,y_m)
:=
\sum_{i_0+\cdots+i_m=s}
y_0^{i_0}\cdots y_m^{i_m}.
\]
In particular,
$h_0(y_0,\ldots,y_m)=1.$
It satisfies the recurrence relation
\begin{equation}
h_s(y_0,\ldots,y_m)
=
h_s(y_0,\ldots,y_{m-1})
+
y_m h_{s-1}(y_0,\ldots,y_m),
\qquad s\geq 1.
\tag{11}
\label{eq.11}
\end{equation}

For $m\geq 0$, define
\[
N_0(x)=1,
\qquad
N_m(x)=\prod_{r=0}^{m-1}(x-x_r),
\qquad m\geq 1.
\]

and
\begin{equation}
\phi_m(x) := \frac{x^k N_m(x)}{x_m^k N_m(x_m)}, \qquad 0 \le m \le n.
\tag{12}
\label{eq.12}
\end{equation}
\begin{lemma}\label{lemma1}
     For any $n \ge 0$, the following identity holds:
\begin{equation}
x^n = \sum_{m=0}^{n} h_{n-m}(x_0,\ldots,x_m)\,N_m(x).
\tag{13}
\label{eq.13}
\end{equation}
\end{lemma}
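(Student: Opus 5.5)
We prove the identity in Lemma~\ref{lemma1} by induction on $n$, using the recurrence (\ref{eq.11}) for the complete homogeneous symmetric polynomials together with the elementary relation
\[
N_{m+1}(x) = (x-x_m)\,N_m(x), \qquad\text{equivalently}\qquad x\,N_m(x) = N_{m+1}(x) + x_m N_m(x).
\]
This is the Newton-interpolation expansion of the monomial $x^n$ in the Newton basis $\{N_m\}$. The coefficients $h_{n-m}(x_0,\ldots,x_m)$ are exactly the divided differences $x^n[x_0,\ldots,x_m]$. The induction is just the algebraic verification of that fact.

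\textbf{Base case.} For $n=0$ the right-hand side is $h_0(x_0)N_0(x)=1=x^0$.

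\textbf{Inductive step.} Assume (\ref{eq.13}) holds for $n$ and multiply both sides by $x$. Using the relation above,
\[
x^{n+1}=\sum_{m=0}^{n} h_{n-m}(x_0,\ldots,x_m)\,N_{m+1}(x) + \sum_{m=0}^{n} x_m\,h_{n-m}(x_0,\ldots,x_m)\,N_m(x).
\]
Reindex the first sum by $m\mapsto m-1$, so that it runs over $1\le m\le n+1$ with coefficient $h_{n+1-m}(x_0,\ldots,x_{m-1})$. Collecting the coefficient of $N_m(x)$ then splits into three cases.
\begin{itemize}
\item For $m=n+1$ the coefficient is $h_0(x_0,\ldots,x_n)=1=h_0(x_0,\ldots,x_{n+1})$.
\item For $m=0$ the coefficient is $x_0h_n(x_0)=x_0^{n+1}=h_{n+1}(x_0)$.
\item For $1\le m\le n$ the coefficient is $h_{s}(x_0,\ldots,x_{m-1})+x_m h_{s-1}(x_0,\ldots,x_m)$ with $s=n+1-m\ge 1$. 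By the recurrence (\ref{eq.11}) this equals $h_{n+1-m}(x_0,\ldots,x_m)$.
\end{itemize}
This gives (\ref{eq.13}) for $n+1$.

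\textbf{Main obstacle.} The only delicate point is the index bookkeeping in the middle range. One must apply (\ref{eq.11}) with the correct number of variables, namely $h_s(y_0,\ldots,y_m)$ with last variable $y_m=x_m$. One must also treat the two boundary terms separately, where one of the two sums contributes nothing. Identity (\ref{eq.11}) itself follows by splitting the monomials in $h_s(y_0,\ldots,y_m)$ according to whether $i_m=0$ or $i_m\ge1$. I would quote it as stated in the paper rather than reprove it.
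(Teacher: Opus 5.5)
Your proof is correct and follows the same route as the paper. Both argue by induction on $n$, write $x\,N_m(x)=N_{m+1}(x)+x_mN_m(x)$, reindex the first sum, and merge coefficients via recurrence (11), treating the boundary terms $m=0$ and $m=n+1$ separately; your explicit three-case bookkeeping is a cleaner presentation of the paper's Eqs.~(14)--(16).
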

\begin{proof}
    When $n=0$, since $h_0(x_0)=1$ and $N_0(x)=1$, Eq.~(\ref{eq.13}) clearly holds.

Assume that Eq.(\ref{eq.13}) holds for $n=j$. Consider the case $n=j+1$. By the induction hypothesis,
\[
x^j = \sum_{m=0}^{j} h_{j-m}(x_0,\ldots,x_m)\,N_m(x).
\]
Multiplying both sides by $x$, we obtain
\[
x^{j+1} = \sum_{m=0}^{j} h_{j-m}(x_0,\ldots,x_m)\,x\,N_m(x).
\]
Using the identity $x=(x-x_m)+x_m$, we have
\begin{equation}
x^{j+1}
=
\sum_{m=0}^{j} h_{j-m}(x_0,\ldots,x_m)\,N_{m+1}(x)
+
\sum_{m=0}^{j} x_m\,h_{j-m}(x_0,\ldots,x_m)\,N_m(x).
\tag{14}
\label{eq.14}
\end{equation}

Letting $r=m+1$, Eq.(\ref{eq.14}) can be rewritten as
\begin{equation}
x^{j+1}
=
x_0 h_j(x_0) N_0(x)
+
\sum_{r=1}^{j} \bigl(h_{j+1-r}(x_0,\ldots,x_{r-1})
+ x_r h_{j-r}(x_0,\ldots,x_r)\bigr) N_r(x)
+ N_{j+1}(x).
\tag{15}
\label{eq.15}
\end{equation}

By Eq.~(\ref{eq.11}) and the fact that $h_0(x_0,\ldots,x_{j+1})=1$, we further obtain
\begin{equation}
x^{j+1}
=
h_{j+1}(x_0) N_0(x)
+
\sum_{r=1}^{j} h_{j+1-r}(x_0,\ldots,x_r) N_r(x)
+ N_{j+1}(x).
\tag{16}
\label{eq.16}
\end{equation}

Hence,
\[
x^{j+1}
=
\sum_{m=0}^{j+1} h_{j+1-m}(x_0,\ldots,x_m)\,N_m(x).
\]
Therefore, Eq.(\ref{eq.13}) holds for all $j \ge 0$ by mathematical induction.
\end{proof}

\begin{lemma}\label{lem:lu-decomposition}
    Let
\[
V = V(x_0,\ldots,x_n; k) := \bigl(x_i^{k+j}\bigr)_{0 \le i,j \le n}.
\]
If $x_0,\ldots,x_n \in \mathbb{F}_q$ are pairwise distinct, and if $k>0$, then $V$
admits an LU-decomposition
\[
V = LU,
\]
where $L$ is a unit lower triangular matrix and $U$ is an upper triangular matrix.
\end{lemma}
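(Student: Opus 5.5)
The plan is to write down $L$ and $U$ explicitly, using Lemma~\ref{lemma1} together with the functions $\phi_m$ from Eq.~(\ref{eq.12}), rather than appealing to the abstract criterion that all leading principal minors are nonzero.

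\textbf{Step 1: expand the entries.} Apply Lemma~\ref{lemma1} with $n$ replaced by $j$ and multiply by $x^k$. This gives, for every $x$,
\[
x^{k+j}=\sum_{m=0}^{j} x^k N_m(x)\,h_{j-m}(x_0,\ldots,x_m).
\]
Adopt the convention $h_s=0$ for $s<0$; then the sum may run over $0\le m\le n$. Evaluating at $x=x_i$ gives
\[
V_{ij}=\sum_{m=0}^{n} x_i^kN_m(x_i)\,h_{j-m}(x_0,\ldots,x_m).
\]

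\textbf{Step 2: define the factors.} Set
\[
L_{im}:=\phi_m(x_i)=\frac{x_i^kN_m(x_i)}{x_m^kN_m(x_m)},\qquad U_{mj}:=x_m^kN_m(x_m)\,h_{j-m}(x_0,\ldots,x_m).
\]
Then $V_{ij}=\sum_m L_{im}U_{mj}$, that is, $V=LU$.

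\textbf{Step 3: check the triangular structure.}
\begin{itemize}
\item $U$ is upper triangular, because $h_{j-m}=0$ whenever $j<m$.
\item $L$ is lower triangular: if $i<m$, then $x_i$ is one of $x_0,\ldots,x_{m-1}$, so $N_m(x_i)=\prod_{r<m}(x_i-x_r)=0$.
\item $L$ has unit diagonal, since $L_{mm}=\phi_m(x_m)=1$.
\item The diagonal of $U$ is $U_{mm}=x_m^kN_m(x_m)$, which is nonzero under the assumption below.
\end{itemize}

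\textbf{Main obstacle: well-definedness of $\phi_m$.} The denominator $x_m^kN_m(x_m)$ must be nonzero. The factor $N_m(x_m)=\prod_{r<m}(x_m-x_r)$ is nonzero because the points are pairwise distinct. The factor $x_m^k$ additionally requires $x_m\neq 0$, since $k>0$. I would state explicitly that the evaluation points are nonzero, as they are for the GRS codes of Corollary~4 of \cite{ref6} where this lemma is applied.

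This hypothesis is essentially necessary, not an artifact of the construction. The $(m+1)$-th leading principal minor of $V$ equals $\prod_{i\le m}x_i^k\prod_{r<i\le m}(x_i-x_r)$. Hence, with distinct points, every leading minor is nonzero exactly when all $x_i\neq 0$. By the standard criterion, this is precisely the condition for the unique LU-decomposition of the invertible matrix $V$ with unit lower triangular $L$.

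The only remaining routine check is the index bookkeeping in Step 1, namely extending the sum to $m\le n$ via the convention $h_s=0$ for $s<0$.
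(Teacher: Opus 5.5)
Your proposal is correct and follows the paper's route: both multiply the identity of Lemma~\ref{lemma1} by $x^k$, take $L_{im}=\phi_m(x_i)$ and $U_{mj}=x_m^kN_m(x_m)\,h_{j-m}(x_0,\ldots,x_m)$, and evaluate at $x=x_i$. You are slightly more careful than the paper in deriving $L_{im}=0$ for $i<m$ from $N_m(x_i)=0$, whereas the paper sets these entries to zero by definition and uses the vanishing of $\phi_m(x_i)$ only tacitly.

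Your observation that the denominator $x_m^kN_m(x_m)$ forces $x_m\neq 0$ when $k>0$ identifies a genuine omission in the lemma as stated. It is harmless in the paper, which always uses distinct nonzero points. Your necessity remark is accurate except in the edge case $x_n=0$, where the zero last row still admits a factorization with $U_{nn}=0$.
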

\begin{proof}
    By Lemma~1, multiplying both sides of Eq.(\ref{eq.13}) by $x^k$ yields
\[
x^{k+j} = \sum_{m=0}^{j} h_{j-m}(x_0,\ldots,x_m)\,x^k N_m(x).
\]
By Eq.~(\ref{eq.12}), $x^k N_m(x) = x_m^k N_m(x_m)\,\phi_m(x)$, and hence
\[
x^{k+j}
=
\sum_{m=0}^{j} h_{j-m}(x_0,\ldots,x_m)\,x_m^k N_m(x_m)\,\phi_m(x).
\]

Define the entries of $L$ by
\begin{equation}
l_{im} :=
\begin{cases}
0, & i < m,\\[4pt]
\phi_m(x_i)
= \dfrac{x_i^k \prod_{r=0}^{m-1}(x_i - x_r)}
{x_m^k \prod_{r=0}^{m-1}(x_m - x_r)}, & i \ge m,
\end{cases}
\tag{17}
\label{eq.17}
\end{equation}
and define the matrix $U = (u_{mj})_{0 \le m,j \le n}$ by
\begin{equation}
u_{mj} :=
\begin{cases}
0, & m > j,\\
x_m^k N_m(x_m)\, h_{j-m}(x_0,\ldots,x_m), & m \le j.
\end{cases}
\tag{18}
\label{eq.18}
\end{equation}

Setting $x = x_i$, we obtain
\begin{equation}
x_i^{k+j}
=
\sum_{m=0}^{j} u_{mj}\,\phi_m(x_i)
=
\sum_{m=0}^{n} l_{im} u_{mj}
= (LU)_{ij}.
\tag{19}
\label{eq.19}
\end{equation}

Thus, $V=LU$.
where
$U = (u_{mj})_{0 \le m,j \le n}, 
L = (l_{im})_{0 \le i,m \le n}.$
\end{proof}

\begin{lemma}
[\textnormal{[6, Corollary~4]}]\label{lem:ce-qss-construction}
Let $q \ge n+1$ be a prime. For any $0<z<t<d\le n=t+z$, let
$B_0,B_1,B_2,A_1,A_2,E$ be generalized Reed--Solomon codes used for encoding,
where $B_0$ is generated by $V(x_0,\ldots,x_n;k)=(x_i^{k+j})_{0\le i,j\le n}$.
Set
\[
b_0=d,\quad b_1=z,\quad b_2=z-d+t,\quad e=d-t,\quad a_1=d-z,\quad a_2=t-z.
\]
Then one obtains a quantum secret sharing (QSS) scheme with parameters
$\bigl((t,n=t+z; d; z)\bigr)_q$.
\end{lemma}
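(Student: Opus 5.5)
This lemma is quoted from Corollary~4 of Ref.~\cite{ref6}. I would therefore verify that the stated parameters meet the hypotheses of the general ECSS-based CE-QSS theorem there, rather than re-derive the scheme. That theorem says: given nested codes $C_1\subseteq C_0$ and a code $E$ with $C_0\cap E=\{0\}$, the code $\mathrm{ECSS}(C_0,C_1,G_E)$ yields a $((t,n;d;z))$ scheme once three things hold. First, every $t$-set must be able to correct the erasure of its complement. Second, every $z$-set must carry no information. Third, every $d$-set must be able to recover the secret by downloading only the first-layer qudits plus the appropriate extended qudits.

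\textbf{Step 1: build the codes.} I would build every code from the rows of $V(x_0,\ldots,x_n;k)$, with $x_0,\ldots,x_n$ distinct nonzero elements of $\mathbb{F}_q$; these exist because $q\ge n+1$. Take $B_i$ to be generated by the first $b_i$ columns of $V$, viewed as codewords indexed by the evaluation points. This gives the nested chain $B_2\subseteq B_1\subseteq B_0$. The codes $A_1,A_2,E$ are spanned by blocks of consecutive columns of dimensions $a_1=d-z$, $a_2=t-z$ and $e=d-t$.

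Each such code is a GRS code with column multipliers $x_i^k$. Hence any $m\times m$ minor taken from $m$ consecutive columns is a nonzero scalar times a Vandermonde determinant. Lemma~\ref{lem:lu-decomposition} gives this cleanly: $V=LU$ with $L$ unit lower triangular and $U$ upper triangular with nonzero diagonal $x_m^kN_m(x_m)$, so every leading principal minor is nonzero. The same holds after any permutation of the evaluation points.

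\textbf{Step 2: check the conditions by rank counting.} For a set $A$ of participants I would restrict the generator matrices to the coordinates of $A$, computing in terms of the ECSS generator matrices $G_{D_0},G_{D_1}$.

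\emph{Recovery.} For $|A|\ge t$, the code restricted to $A$ must determine the secret coset uniquely. This amounts to showing that the coordinates in $A$ give a full-rank restriction of the consecutive block, which follows from the MDS property.

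\emph{Secrecy.} For $|A|\le z$, the restriction of $C_1$ must span everything that $C_0$ restricts to, and likewise for the dual pair. Here $b_1=z$ consecutive columns restricted to $z$ points form an invertible matrix.

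\emph{Communication efficiency.} For $|A|=d$, the combiner uses the $d-t=e$ extended columns together with $d$ first-layer qudits. The dimension identities $b_0=d$, $b_2+e=z$, $a_1=b_0-b_1$ and $a_2=a_1-e$ make the relevant square submatrices exactly $d\times d$ and $t\times t$, so they are invertible.

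\textbf{Main obstacle.} The hard part is the dual-code (phase) condition. It requires identifying the duals $B_i^{\perp}$ as GRS codes with the correct multipliers and checking that they are also MDS with the right dimensions. I would use the standard fact that the dual of a GRS code is a GRS code on the same points, with multipliers $v_i'=\bigl(v_i\prod_{j\ne i}(x_i-x_j)\bigr)^{-1}$. This reduces the phase conditions to the same consecutive-column Vandermonde argument as above. Finally, the requirement $n=t+z$ ensures that the $X$-distance and the $Z$-distance both give the same threshold $t$, so the scheme is pure threshold between $z$ and $t$.
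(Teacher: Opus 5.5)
The paper gives no proof of this lemma; it is imported from [6, Corollary~4]. The natural benchmark is therefore the explicit layout of $G_{B_0}$ in Section~4 together with the two recovery procedures there (Cases~1 and~2). Your rank-counting strategy is the right kind of argument, but the layout you fix in Step~1 makes a key step fail. Suppose $B_2\subseteq B_1\subseteq B_0$ are the first $b_2,b_1,b_0$ columns of $V$, and $A_1,E$ are column blocks with $B_1=B_2\oplus E$ and $B_0=A_1\oplus B_1$ (the identities you use). Then $E$ is forced to be columns $b_2,\dots,z-1$ and $A_1$ columns $z,\dots,d-1$.

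A $t$-set first uses the second layer to learn $R_{1,2}$ and strip the $E$-term. It must then invert the $t\times t$ restriction of $A_1\oplus B_2$, which is the matrix $V_2$ of Case~1. In your layout this code has exponents $\{0,\dots,b_2-1\}\cup\{z,\dots,d-1\}$, with a gap of length $e=d-t\ge 1$ whenever $d<n$. Such minors equal a Vandermonde determinant times a Schur polynomial, and they can vanish over $\mathbb{F}_q$.

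Concretely, take the paper's $((4,6,5;2))_7$ parameters with $x_i=i$. Your $A_1\oplus B_2$ is spanned by $x^k,x^{k+2},x^{k+3},x^{k+4}$. In $\mathbb{F}_7[x]$ one has $(x-1)(x-2)(x-5)(x-6)=x^4+2x^2+4$. So $x^k(x^4+2x^2+4)$ is a codeword with nonzero $A_1$-part that vanishes on $\{P_1,P_2,P_5,P_6\}$. This authorized set cannot determine the secret, so your claim ``so they are invertible'' is false.

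The fix is the paper's ordering: consecutive blocks of increasing powers in the order $A_1/A_2,\ A_2,\ B_2,\ E$. Then $A_1\oplus B_2$ is the lowest $t$ powers, $B_1=B_2\oplus E$ the highest $z$, and $A_2\oplus B_1$ the highest $t$; $B_2$ and $B_0$ are single blocks. Every required code is then a GRS code with nonzero multipliers (this is where $x_i\neq 0$ is used), hence MDS.

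Two further corrections.

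First, the dual-code step you call the main obstacle is unnecessary. For an encoding $|s\rangle\mapsto\sum_r|M(s,r)\rangle$ of this CSS type, a set recovers the quantum secret iff two things hold: it classically determines $s$, and the rest of the system classically learns nothing about $s$. So every phase condition is a primal rank condition on a complement, and you never need to identify $B_i^{\perp}$. For a $t$-set the complement is a $z$-set (this is where $n=t+z$ enters). That case is settled because $B_1$, being MDS of dimension $z$, masks both layers.

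Second, your third condition misdescribes Case~2. A $d$-set downloads \emph{no} second-layer (extended) qudits, so its complement contains all second-layer qudits, and these reveal $R_{1,2}$. Coherent recovery then requires the restriction of $B_2$ to the $n-d$ outside coordinates to be invertible. That is exactly why $b_2=n-d$, and your list of conditions never states it.

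Minor: only $n$ evaluation points are needed, one per participant. Having $n+1$ distinct nonzero points would require $q\ge n+2$.
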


\begin{theorem}\label{thm2}
For the quantum secret sharing scheme given in Lemma~\ref{lem:ce-qss-construction},
intermediate sets may leak partial information about the quantum secret.
When the secret is encoded in computational basis states, if
\begin{equation}
q \,\Bigg|\,
\sum_{i_0+i_1+\cdots+i_t=l}
x_0^{i_0}x_1^{i_1}\cdots x_t^{i_t},
\qquad
l=1,2,\ldots,n-t,
\tag{20}
\label{eq.20}
\end{equation}
then the scheme leaks information about at least one secret component.
\end{theorem}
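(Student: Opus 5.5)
We will derive the statement from Theorem~\ref{thm1} by showing that, under condition (\ref{eq.20}), a suitable participant set $B$ with $z<|B|<t$ has a coefficient matrix $M_B$ that row-reduces to a row supported on a single secret coordinate. The natural candidate, suggested by Example~2, is the set whose first-layer shares are governed by rows of $V(x_0,\ldots,x_n;k)$ indexed by $x_0,\ldots,x_t$. The key structural tool is Lemma~\ref{lem:lu-decomposition}. Because $V=LU$ with $L$ unit lower triangular, we can apply $L^{-1}$ to the restricted rows; this is a sequence of $\mathrm{ADD}/\mathrm{ADD}^{\dagger}$ operations at the coefficient level. The transformed coefficient matrix is then the corresponding block of $U$. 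By (\ref{eq.18}), its entries are $u_{mj}=x_m^kN_m(x_m)\,h_{j-m}(x_0,\ldots,x_m)$.

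The second step is to locate the random-variable columns inside $U$. In the construction of Lemma~\ref{lem:ce-qss-construction}, the secret occupies a prescribed set of columns of $B_0$, and the randomness occupies the remaining $b_1=z$ columns together with the $e=d-t$ columns contributed through $E$. We will first cancel the $E$-randomness exactly as in Example~2. For this we row-reduce the second-layer (ancilla) shares, which are a Vandermonde-type system in the $r_2$ variables, and subtract the resulting pure-random registers from the first layer. We then look at row $m=t$ of $U$. By (\ref{eq.18}), its entries in columns $j=t+l$ are $x_t^kN_t(x_t)\,h_l(x_0,\ldots,x_t)$. Hypothesis (\ref{eq.20}) says that $h_l(x_0,\ldots,x_t)\equiv 0 \pmod q$ for $l=1,\ldots,n-t$, so every entry of this row beyond the diagonal vanishes. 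The diagonal entry is $a=x_t^kN_t(x_t)$, and it is nonzero because $k>0$, the $x_i$ are pairwise distinct, and (as must be checked) $x_t\neq 0$. The row therefore has the form $(0,\ldots,0,a,0,\ldots,0)$. Rows $m<t$ are upper triangular, so their entries in columns $j<m$ already vanish, which means the leading zeros of row $t$ are automatic.

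The remaining step is bookkeeping, and it is where we expect the main obstacle. We must verify that column $t$ of $B_0$ is a secret column and that columns $t+1,\ldots,n$ contain all random columns not already eliminated. If instead some random coordinate sits in a column $j\le t$, we will first permute or choose the participant ordering (the evaluation points) so that the secret columns precede the random ones. This ordering is the role that the indexing $k+j$ in $V$ plays in Corollary~4 of \cite{ref6}. We also need $|B|<t$ and $B$ not qualified, so that $B$ is genuinely intermediate rather than authorized. Here we will use the two-layer share structure, in which each participant holds two rows, exactly as in Example~2 where $3$ participants supply the $t+1=5$ evaluation points. Once the row of $TM_B$ with a single nonzero secret entry $a$ and zero random part is exhibited, Theorem~\ref{thm1} gives that $B$ is an intermediate set that leaks $s_i$ in the computational basis, completing the proof.
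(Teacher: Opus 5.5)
Your core computation is essentially the paper's whole proof: realize $L^{-1}$ from Lemma~\ref{lem:lu-decomposition} by ADD gates, read row $t$ of $U$ off (18), and use (20) to kill its entries beyond the diagonal. The paper then simply asserts that the isolated register carries a secret, so it does not supply the check either. You rightly flag that check as the step still to be done. It is a genuine gap, and both of your proposed fixes fail.

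First, the two layers do not supply extra evaluation points. All of participant $i$'s qudits are entries of row $i$ of $C$, which is built from the same generator $G_{B_0}$ acting on different columns of $Y$. So they are all evaluations at the single point $x_i$. In Example~2 the coefficient matrix (10) of $J=\{P_1,P_2,P_4\}$ is $3\times 5$: three points, and five columns because $d=5$ powers occur. Likewise, Section~5.1 checks $h_1,h_2$ of the three points $1,2,4$, not of five points. Hence rows indexed by $x_0,\ldots,x_t$ come from $t+1$ distinct participants, which is a qualified set.

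Second, column $t$ is never a secret column. The secret occupies the first $a_1$ columns (powers), and $a_1=d-z\le n-z=t<d$. So the pivot of row $t$ multiplies an entry of $R_{1,1}$ or $R_{1,2}$, and a row $(0,\ldots,0,a,0,\ldots,0)$ there only isolates randomness and leaks nothing. Reordering participants cannot fix this. It permutes the rows of $V$, not its columns, and the pivot of row $m$ of $U$ is always column $m$. The secret columns already precede the random ones; there are simply only $a_1\le t$ of them.

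What makes the argument work, and what Example~2 actually does, is to match the number of points to the number of secret columns. Take $B$ with $p=a_1=d-z$ participants and points $y_0,\ldots,y_{p-1}$. This $B$ is intermediate when $2z<d<n$, since then $z<p<t$. Restrict the first layer to $B$, giving the $p\times d$ matrix $(y_i^{k+j})$. Evaluate Lemma~\ref{lemma1} (multiplied by $x^k$) at $y_i$, using $N_m(y_i)=0$ for $m>i$. This gives a factorization $LU$ with $L$ a $p\times p$ unit lower triangular matrix and $U$ of size $p\times d$, where $U_{mj}=y_m^kN_m(y_m)h_{j-m}(y_0,\ldots,y_m)$. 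The factorization works for any number of columns and any $k\ge 0$.

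Row $p-1$ of $U$ is $y_{p-1}^kN_{p-1}(y_{p-1})\,(0,\ldots,0,1,h_1,\ldots,h_z)$ with $h_l=h_l(y_0,\ldots,y_{p-1})$. Its pivot is the last secret column $a_1-1$, and its $z$ trailing entries sit exactly on the $b_2+e=z$ random columns. So the hypothesis must be read as $h_l(y_0,\ldots,y_{p-1})=0$ for $l=1,\ldots,z=n-t$, on the points of $B$ itself. Then Theorem~\ref{thm1} applies in each of the $v_1$ first-layer blocks. No second-layer cancellation of the $E$-randomness is needed, and you gave no argument that such cancellation is possible for a general $B$. In Example~2 this row is $(0,0,6,0,0)$ and reveals $6s_5$ and $6s_6$; only the extra component $s_4$ there requires the second layer.
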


\begin{proof}
According to Eq.~(\ref{eq.19}) in Lemma~\ref{lem:lu-decomposition}, the
Vandermonde-type matrix
\begin{equation*}
V(x_0,x_1,\ldots,x_n)
=
\begin{bmatrix}
x_0^k & x_0^{k+1} & \cdots & x_0^{k+t} \\
x_1^k & x_1^{k+1} & \cdots & x_1^{k+t} \\
\vdots & \vdots & \ddots & \vdots \\
x_n^k & x_n^{k+1} & \cdots & x_n^{k+t}
\end{bmatrix}
\end{equation*}
admits an LU decomposition
\begin{equation}
V=LU,
\qquad
V_{ij}=\sum_{m=0}^{n}L_{im}U_{mj},
\tag{21}
\label{eq.21}
\end{equation}
where \(L=(L_{im})_{0\le i,m\le n}\) and
\(U=(U_{mj})_{0\le m,j\le n}\) are given by
Eqs.(\ref{eq.17}) and~(\ref{eq.18}), respectively.

Since \(L\) is a unit lower triangular matrix, it can be expressed as a
finite product of elementary matrices of the third type. Moreover, performing
a third-type elementary row operation on the coefficient matrix corresponding
to the particles is equivalent to applying an \(\operatorname{ADD}\) gate to
the corresponding particles. Therefore, after carrying out the corresponding
\(\operatorname{ADD}\) operations, the coefficient matrix associated with the
particles can be transformed into the following upper triangular form:
\[
\resizebox{\textwidth}{!}{$
\begin{bmatrix}
x_0^k
&
x_0^{k+1}
&
x_0^{k+2}
&
\cdots
&
x_0^{k+n}
\\[1mm]
0
&
x_1^k(x_1-x_0)
&
x_1^k(x_1-x_0)(x_1+x_0)
&
\cdots
&
x_1^k(x_1-x_0)
\displaystyle\sum_{i+j=n-1}x_0^i x_1^j
\\[2mm]
0
&
0
&
x_2^k(x_2-x_0)(x_2-x_1)
&
\cdots
&
x_2^k(x_2-x_0)(x_2-x_1)
\displaystyle\sum_{i+j+m=n-2}x_0^i x_1^j x_2^m
\\
\vdots
&
\vdots
&
\vdots
&
\ddots
&
\vdots
\\[1mm]
0
&
0
&
0
&
\cdots
&
x_n^k(x_n-x_0)(x_n-x_1)\cdots(x_n-x_{n-1})
\end{bmatrix}
$}
\]

Under the condition
\[
q \,\Bigg|\,
\sum_{i_0+i_1+\cdots+i_t=l}
x_0^{i_0}x_1^{i_1}\cdots x_t^{i_t},
\qquad
l=1,2,\ldots,n-t,
\]
all the entries to the right of the diagonal entry in the \(t\)-th row vanish,
whereas the diagonal entry remains nonzero because
\(x_0,x_1,\ldots,x_t\) are pairwise distinct and nonzero. Consequently, the
particle corresponding to this diagonal entry contains nontrivial information
about the quantum secret and is independent of the random variables.

From the above discussion, starting from the $(t+1)$-th row, each row of the 
matrix contains only one nonzero entry, namely its diagonal entry, while all the 
other entries are zero. Therefore, these rows satisfy the condition in 
Theorem~\ref{thm1} and hence reveal information about the quantum particles corresponding 
to the respective nonzero diagonal entries.
\end{proof}

\section{Construction of an Efficient  PQSS}

In this section, we improve a non-PQSS scheme with intermediate sets into
PQSS by incorporating a classical secret sharing mechanism.

\subsection{System Model}

For the CE-QSS given in Lemma~3,
we assume that
\[
q \mid
\sum_{i_0+i_1+\cdots+i_t=l}
x_0^{i_0}x_1^{i_1}\cdots x_t^{i_t},
\qquad
l=1,2,\ldots,n-t.
\]

By Theorem \ref{thm2}, the proposed scheme necessarily contains intermediate sets. 
To eliminate this security risk, this section introduces a classical threshold 
secret sharing mechanism in addition to the original quantum sharing scheme. 
The classical layer is used to protect the random key employed for pre-encrypting 
the quantum secret. In this way, the Non-PQSS scheme can be transformed into a 
PQSS scheme, thereby preventing partial participants from recovering partial 
information about the secret and hence avoiding secret leakage.

In this section, we construct a class of hybrid quantum secret sharing schemes 
over the finite field $\mathbb{F}_q$, where $q>n$ is a prime and all operations 
are performed over $\mathbb{F}_q$. The proposed scheme adopts a pre-specified 
reconstruction-set model similar to that in Ref.~\cite{ref9}.In this work, we assume that the authorized subset used for reconstruction is determined prior to the distribution stage. That is, before Alice distributes the encoded quantum systems, the combiner Peter specifies the set of participants that will take part in the recovery process in the current round. Under this pre-announced reconstruction model, Alice distributes only the quantum shares associated with the specified authorized subset, whereas the remaining quantum systems are retained by Alice.

Theoretically, $n$ groups of encoded shares are still generated. However, only 
the shares belonging to the pre-specified authorized set $J$ are actually 
transmitted to the participants in $J$, whereas the remaining shares do not 
enter the distribution phase of the current round. The main purpose of this 
setting is to reduce the quantum communication cost in the distribution phase. 
Therefore, the subsequent analysis of the communication cost in the distribution 
phase and the overall communication efficiency is based on this model 
assumption.

The proposed scheme involves a dealer Alice, $n$ participants, and a combiner 
Peter. Alice is responsible for encoding and distributing the secret quantum 
state, and for inserting test code blocks during the protocol execution to 
enable security detection. Each participant receives and stores the corresponding 
quantum shares and classical shares. In the secret reconstruction phase, Peter 
collects the required shares from the pre-specified authorized set and recovers 
the original quantum secret by performing the corresponding unitary operations. 
For ease of implementation, we further assume that authenticated classical 
channels are available among the participants and between the participants and 
the combiner, so that the necessary auxiliary information can be exchanged. We 
also assume that the participants are capable of performing the required 
measurements according to the protocol.

\subsection{Preparation Phase}

\subsubsection{Parameter Setting}

Let $t$ denote the minimum number of participants required to reconstruct the secret, and let $d$ denote the number of participants used for efficient secret recovery, where $t \le d \le n = t + z.$
Define
\[
b_0 = d,\quad b_1 = z,\quad b_2 = n - d,\quad e = d - t,\quad
a_1 = d - z,\quad a_2 = t - z,
\]
\[
v_1 = \frac{a_2}{\gcd(a_2, a_1 - a_2)}, \qquad
v_2 = \frac{a_1 - a_2}{\gcd(a_2, a_1 - a_2)}.
\]

Then $m = a_1 v_1 = \mathrm{lcm}(a_1, a_2)$. The dealer Alice considers a vector
\[
\bm{m} = (e_1, e_2, \ldots, e_m) \in \mathbb{F}_q^m,
\]
which corresponds to the secret quantum state
\[
|\bm{m}\rangle = |e_1, e_2, \ldots, e_m\rangle \in (\mathbb{C}^q)^{\otimes m}.
\]

\subsubsection{Pre-encryption of the Quantum Secret Using a Classical $(t,n)$ Threshold Scheme}

Alice randomly selects two classical vectors of length $m$,
\[
\bm{a} = (a^1, \ldots, a^m) \in \mathbb{F}_q^m,\quad
\bm{b} = (b^1, \ldots, b^m) \in \mathbb{F}_q^m,
\]
and denotes by $K = (\bm{a}, \bm{b}) \in \mathbb{F}_q^{2m}$ the classical keys used for pre-encrypting the quantum secret.

Alice then applies a classical $(t,n)$ threshold sharing scheme to split the key $K$. Specifically, for each $l \in \{1,2,\ldots,m\}$, she constructs two polynomials of degree at most $t-1$,
\begin{equation}
f_l(x) = a^l + \sum_{r=1}^{t-1} \alpha_{l,r} x^r, \qquad
g_l(x) = b^l + \sum_{r=1}^{t-1} \beta_{l,r} x^r,
\tag{22}
\label{eq.22}
\end{equation}
where the coefficients $\alpha_{l,r}, \beta_{l,r} \in \mathbb{F}_q$ are chosen independently and uniformly at random by Alice.

For each participant $\mathrm{Bob}_i$, define the corresponding classical share as
\[
K_i = \bigl(f_1(\xi_i), \ldots, f_m(\xi_i),\; g_1(\xi_i), \ldots, g_m(\xi_i)\bigr).
\]

Using the classical key $K = (\bm{a}, \bm{b})$, define the quantum pre-encryption operator by
$U_K = \bigotimes_{l=1}^{m} X_l^{a^l} Z_l^{b^l},$
where $X_l$ and $Z_l$ denote the generalized Pauli $X$- and $Z$-operators acting on the $l$-th qudit, respectively.

Alice applies this operator to the original quantum secret and obtains the encrypted quantum secret state
\[
|S\rangle = U_K |bm{m}\rangle = \bigotimes_{l=1}^{m} X_l^{a^l} Z_l^{b^l} |m\rangle.
\]

The encrypted quantum secret state $|S\rangle \in \mathbb{F}_q^m$ is then partitioned into $v_1$ blocks of length $a_1$:
\[
\bm{s}_1 = (s_1, \ldots, s_{a_1}), \quad
\bm{s}_2 = (s_{a_1+1}, \ldots, s_{2a_1}), \ldots, \
\bm{s}_{v_1}= (s_{a_1(v_1-1)+1}, \ldots, s_{a_1v_1}).
\]

These column vectors are stacked to form the matrix
\[
S = (\bm{s}_1, \bm{s}_2, \ldots, \bm{s}_{v_1}) \in \mathbb{F}_q^{a_1 \times v_1}.
\]
\subsubsection{Construction of the Random Matrices and  Encoding }

Let $r_{ij} \in \mathbb{F}_q$, and introduce the random matrices
\[
R_{1,1} = (r_{ij})_{b_2 \times v_1}, \quad
R_{1,2} = (r_{ij})_{e \times v_1}, \quad
R_2 = (r_{ij})_{b_1 \times v_2}.
\]
We define the matrix $Y \in \mathbb{F}_q^{d \times (v_1+v_2)}$ by
\begin{equation}
Y=
\left[
\begin{array}{c|c}
S & \begin{matrix}0\\D\end{matrix} \\
\begin{matrix}R_{1,1}\\R_{1,2}\end{matrix} & R_2
\end{array}
\right]
\tag{23}
\label{eq.23}
\end{equation}

Choose $n$ distinct nonzero elements $x_1, x_2, \ldots, x_n \in \mathbb{F}_q$, and define the matrix $G_{B_0}$ over $\mathbb{F}_q$ accordingly. 

\[
G_{B_0}
=
\begin{bmatrix}
G_{A_1} \\
G_{B_1}
\end{bmatrix}
=
\begin{bmatrix}
G_{A_1/A_2} \\
G_{A_2} \\
G_{B_2} \\
G_E
\end{bmatrix}
=
\begin{bmatrix}
1 & 1 & \cdots & 1 \\
x_1 & x_2 & \cdots & x_n \\
x_1^2 & x_2^2 & \cdots & x_n^2 \\
\vdots & \vdots &  & \vdots \\
x_1^{a_1-a_2-1} & x_2^{a_1-a_2-1} & \cdots & x_n^{a_1-a_2-1} \\
x_1^{a_1-a_2} & x_2^{a_1-a_2} & \cdots & x_n^{a_1-a_2} \\
\vdots & \vdots &  & \vdots \\
x_1^{a_1-1} & x_2^{a_1-1} & \cdots & x_n^{a_1-1} \\
\vdots & \vdots &  & \vdots \\
x_1^{a_1+b_2-1} & x_2^{a_1+b_2-1} & \cdots & x_n^{a_1+b_2-1} \\
\vdots & \vdots &  & \vdots \\
x_1^{b_0-1} & x_2^{b_0-1} & \cdots & x_n^{b_0-1}
\end{bmatrix}.
\]

The matrix $G_{B_0}$ is publicly known to the combiner and all participants.Consider the matrix
$C = G_{B_0} Y.$
Then, for a basis state $(s_1,s_2,\ldots,s_m) \in \mathbb{F}_q^m$, the encoding rule is given by $\mathcal{E}$ as
\begin{equation}
\mathcal{E}: |s_1 s_2 \cdots s_m\rangle \mapsto
|S\rangle =
\sum_{R_{1,1}, R_{1,2}, R_2}
\bigotimes_{i=1}^{n}
|c_{i1} c_{i2} \cdots c_{i(v_1+v_2)}\rangle,
\tag{24}
\label{eq.24}
\end{equation}
where the normalization factor is omitted.

When the encoded state $|\psi\rangle$ is distributed, the particles associated with each participant are organized according to the participant label. For $\mathrm{Bob}_i$, we denote by $P_{(i,j)}$ the particle located at position
$(i-1)(v_1+v_2)+j$
in the encoded superposition state. Thus, $P_{(i,j)}$ corresponds to the $j$-th information particle assigned to $\mathrm{Bob}_i$. Under this indexing convention, the information particles contained in the encoded secret state are arranged into the following sequences:

\[
C_1 = \{P_{11}, P_{12}, \ldots, P_{1(v_1+v_2)}\},
\]
\[
C_2 = \{P_{21}, P_{22}, \ldots, P_{2(v_1+v_2)}\},
\]
\[
\vdots
\]
\[
C_n = \{P_{n1}, P_{n2}, \ldots, P_{n(v_1+v_2)}\}.
\]

\subsection{Distribution Phase}

Suppose that Alice plans to distribute a total of $B$ code blocks, among which 
$B_{\mathrm{sec}}$ blocks are secret code blocks carrying the quantum secret, and 
$B_{\mathrm{test}}$ blocks are test code blocks used for eavesdropping detection. Thus,$B = B_{\mathrm{sec}} + B_{\mathrm{test}}.$

\subsubsection{Selection of Test Code Blocks and Preparation of Logical States}

Alice randomly selects a test-block index set
 $\tau \subset \{1,2,\ldots,B\}, 
    \qquad |\tau| = B_{\mathrm{test}} $.
For each $b\in\tau$, Alice randomly assigns a test type
 $ \mathrm{Type}(b)\in \{Z\text{-test}, X\text{-test}\}.$

For each code block $b=1,\ldots,B$, Alice first prepares a logical state
$|\phi_b\rangle$ in the logical space. Specifically, if $b\in\tau$ and
$\mathrm{Type}(b)=Z\text{-test}$, Alice prepares the logical zero state
$|0_L\rangle$, which is used to detect $X$-type disturbances. If $b\in\tau$
and $\mathrm{Type}(b)=X\text{-test}$, Alice prepares the logical plus state
$|+_L\rangle$, which is used to detect $Z$-type disturbances. If $b\notin\tau$,
then the block is a secret code block, and Alice encodes the quantum secret into
the logical state $ |\phi_b\rangle = |\psi_L\rangle .$
\begin{figure}[htbp]
    \centering
    \includegraphics[width=0.8\textwidth]{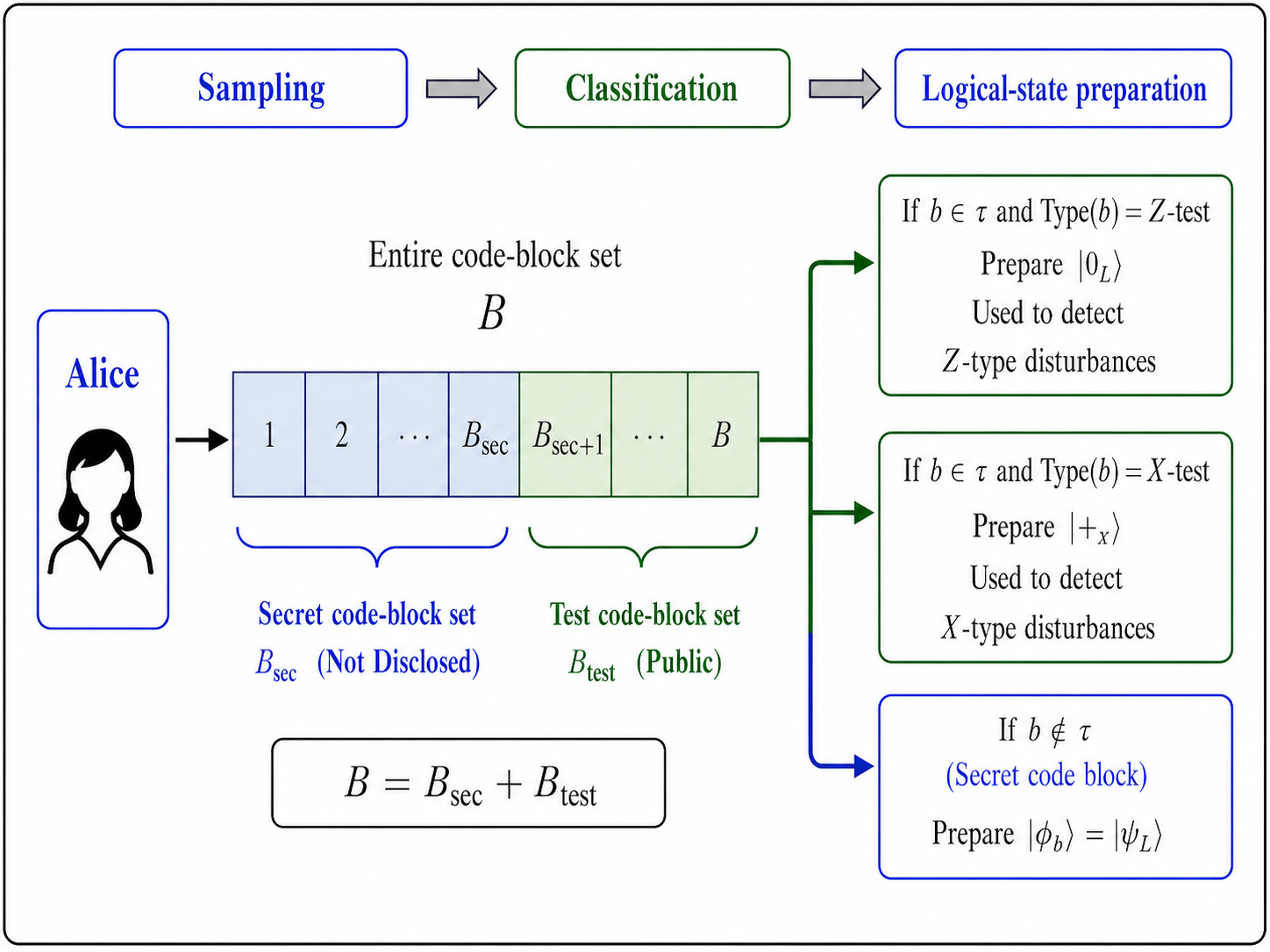}
    \caption{Selection of test code blocks and preparation of logical states}
    \label{fig2}
\end{figure}

\subsubsection{ECSS Encoding and Share Distribution}

After the logical states are determined, Alice applies the ECSS encoding operation
to each code block. The logical state is mapped to an $n$-partite physical quantum
code-block state
\[
    |\Phi_b\rangle = U_{\mathrm{ECSS}}|\phi_b\rangle .
\]
Let $\Phi_b^{(j)}$ denote the $j$-th physical qudit of $|\Phi_b\rangle$, where
$j=1,\ldots,n$. This qudit is distributed to participant $\mathrm{Bob}_j$.

For each participant $\mathrm{Bob}_j$, Alice sends the quantum share sequence
\[
    \left(\Phi_1^{(j)},\Phi_2^{(j)},\ldots,\Phi_B^{(j)}\right)
\]
through a quantum channel. In addition to the quantum share sequence, Alice also
sends the corresponding classical threshold share $K_j$ to $\mathrm{Bob}_j$
through an authenticated classical secure channel, where $j=1,\ldots,n$. After
receiving the shares, each participant confirms receipt through the authenticated
classical channel.
\begin{figure}[htbp]
    \centering
    \includegraphics[width=0.8\textwidth]{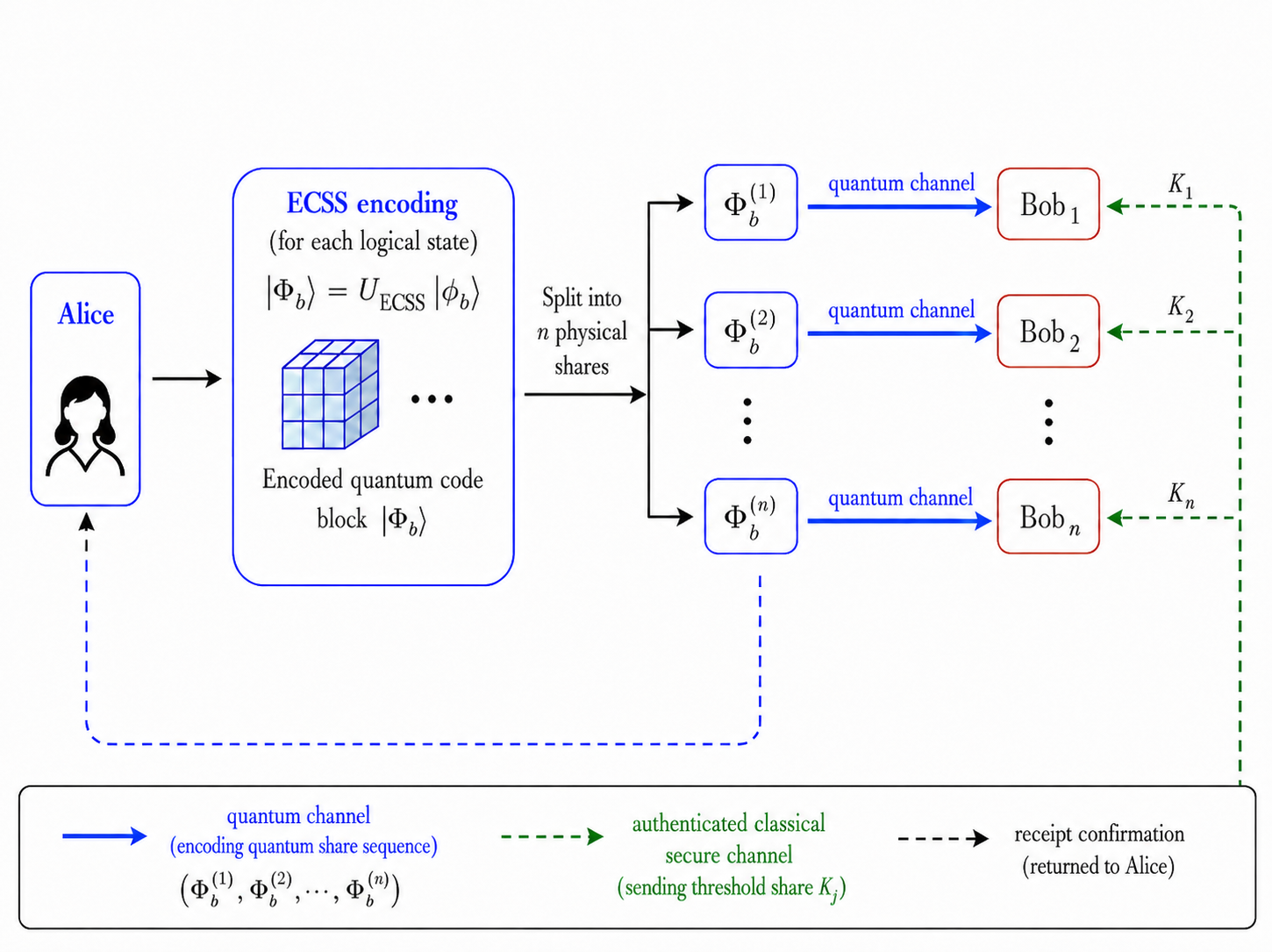}
    \caption{ECSS encoding and share distribution}
    \label{fig:ecss-share-distribution}
\end{figure}
\subsubsection{Local Measurements and Detection Decision}

After all receipt confirmations are completed, Alice announces the test-block
index set $\tau$ and the corresponding test types
\[
    \{\mathrm{Type}(b): b\in\tau\}
\]
through the authenticated classical channel. The information of the secret code
blocks is not disclosed.

For a $Z$-test block, participant $\mathrm{Bob}_j$ measures the qudit
$\Phi_b^{(j)}$ in the computational basis and obtains the measurement outcome
\[
    x_j(b)\in\mathbb{Z}_q .
\]
Then $\mathrm{Bob}_j$ computes
\[
    u_{rj}^{(Z)}(b) \equiv h_{rj}^{(Z)}x_j(b) \pmod q .
\]

For an $X$-test block, participant $\mathrm{Bob}_j$ measures the qudit
$\Phi_b^{(j)}$ in the Fourier basis and obtains the measurement outcome
\[
    k_j(b)\in\mathbb{Z}_q .
\]
Then $\mathrm{Bob}_j$ computes
\[
    u_{rj}^{(X)}(b) \equiv -h_{rj}^{(X)}k_j(b) \pmod q .
\]

Each participant sends $u_{rj}^{(Z)}(b)$ or $u_{rj}^{(X)}(b)$ to Alice through the
authenticated classical channel. Alice then sums the local exponents modulo $q$
over all participants and obtains the $r$-th $Z$-syndrome component and the $r$-th
$X$-syndrome component:
\[
    s_{Z,r}(b) \equiv \sum_{j=1}^{n} u_{rj}^{(Z)}(b) \pmod q,
    \qquad
    s_{X,r}(b) \equiv \sum_{j=1}^{n} u_{rj}^{(X)}(b) \pmod q .
\]

For a test block $b$, if there exists an index $r$ such that
\[
    s_{Z,r}(b)\neq 0
    \quad \text{or} \quad
    s_{X,r}(b)\neq 0 ,
\]
then the test block is judged to fail the stabilizer detection, indicating the
presence of abnormal disturbance or a potential attack. Otherwise, the test block
is regarded as passing the stabilizer detection.
\begin{figure}[htbp]
    \centering
    \includegraphics[width=0.8\textwidth]{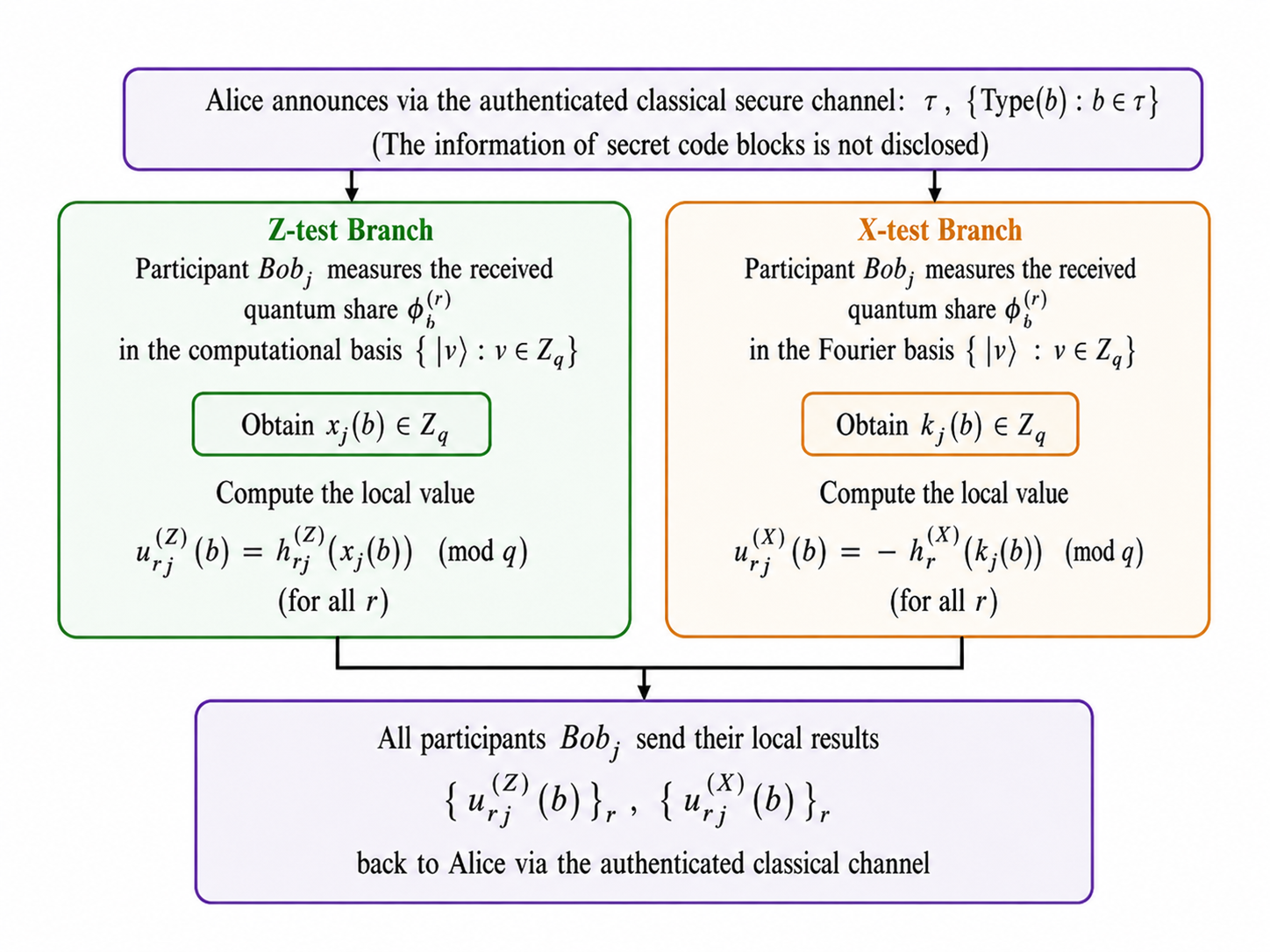}
    \caption{Local measurements by participants}
    \label{fig4}
\end{figure}

\subsubsection{Selection of Test Code Blocks and Preparation of Logical States
}
Alice randomly selects a test-block index set $\tau \subset \{1,2,\ldots,B\}$, with $|\tau| = B_{\mathrm{test}}$, and for each $b \in \tau$, randomly specifies its test type
$\mathrm{Type}(b) \in \{Z\text{-test}, X\text{-test}\}.$

For each code block $b=1,\ldots,B$, Alice first prepares a logical state $|\phi_b\rangle$ in the logical space. If $b \in \tau$ and $\mathrm{Type}(b)=Z$-test, she prepares the logical zero state $|0_L\rangle$ to detect $X$-type disturbances. If $b \in \tau$ and $\mathrm{Type}(b)=X$-test, she prepares the logical plus state $|+_L\rangle$ to detect $Z$-type disturbances. If $b \notin \tau$ (i.e., $b$ is a secret block), Alice encodes the secret state as the logical state $|\phi_b\rangle = |\psi_L\rangle$.
\begin{figure}[htbp]
    \centering
    \includegraphics[width=0.8\textwidth]{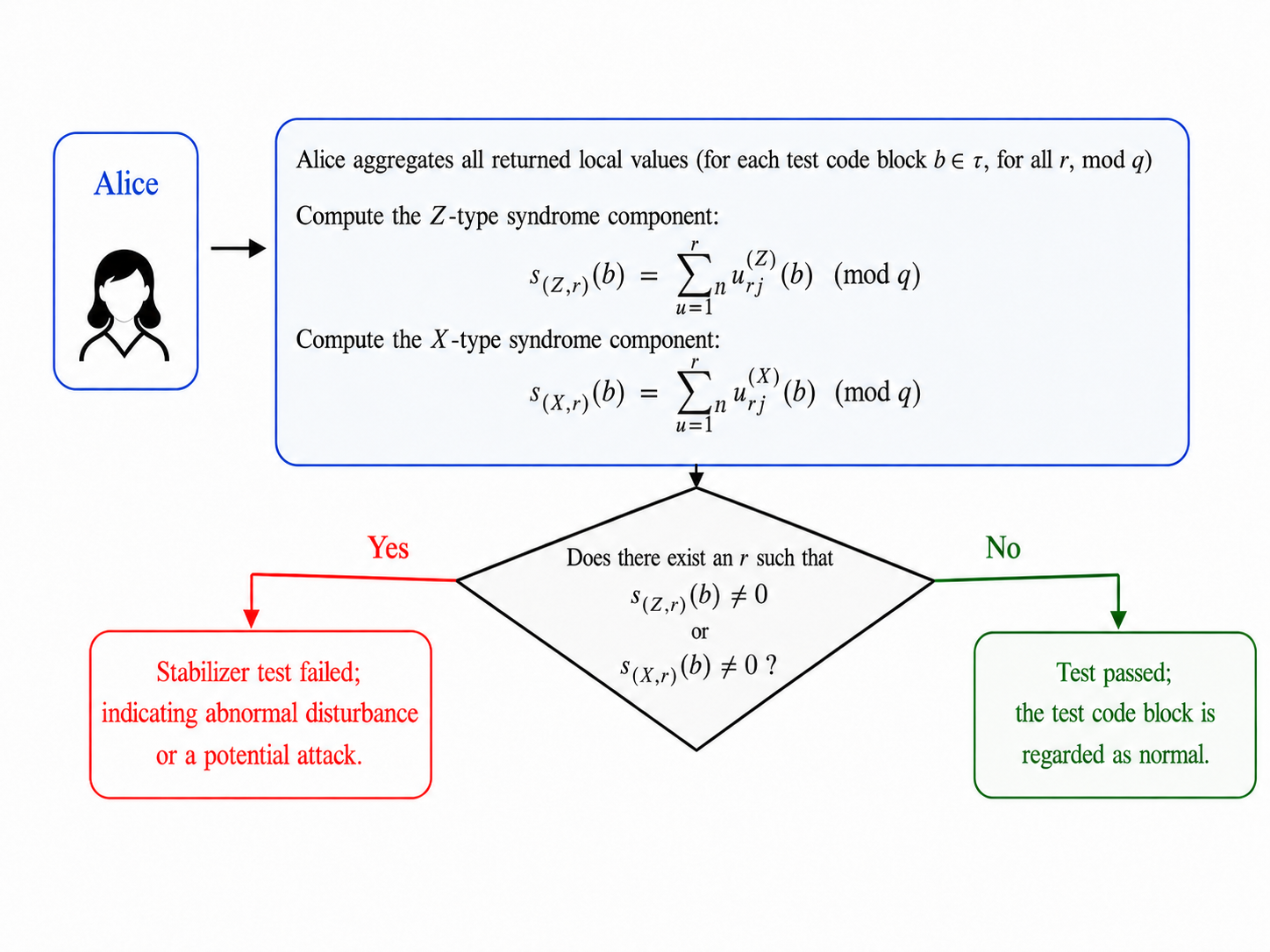}
    \caption{Detection decision based on the syndrome vector}
    \label{fig:syndrome-vector-detection}
\end{figure}

\subsection{Eavesdropping Detection Phase}

Let $N_{\mathrm{tot}}$ denote the total number of stabilizer checks performed over all test blocks, and let $N_{\mathrm{fail}}$ denote the number of failed checks. Define the empirical error rate by
\[
\hat{e} = \frac{N_{\mathrm{fail}}}{N_{\mathrm{tot}}}.
\]

Alice sets a threshold $e_{\mathrm{th}}$ in advance. If $\hat{e} > e_{\mathrm{th}}$, the protocol is aborted and the presence of eavesdropping or severe disturbance is declared. Otherwise, if $\hat{e} \le e_{\mathrm{th}}$, the channel is regarded as acceptable, and the secret blocks are retained for the subsequent secret reconstruction phase.

\subsection{Reconstruction Phase}

\subsubsection{Channel Checking for Transmission from Participants to the Combiner}

When participant $\mathrm{Bob}_j$ is required to take part in secret reconstruction, 
he should transmit to the combiner Peter the qudits specified by the protocol 
that are needed for Peter's reconstruction operation. To detect possible 
eavesdropping or strong disturbance on the quantum channel 
$\mathrm{Bob}_j\rightarrow \mathrm{Peter}$, the sender inserts decoy particles into 
the transmission sequence and performs a sampling test. The detailed procedure is 
as follows.

Let $M_j$ denote the set of qudits that $\mathrm{Bob}_j$ needs to send for secret 
reconstruction. Participant $\mathrm{Bob}_j$ randomly prepares $N$ decoy qudits and 
inserts them into $M_j$ at random positions, thereby forming the final transmission 
sequence $L_j$. He then sends $L_j$ to Peter through the quantum channel. After 
Peter receives the sequence, he returns an acknowledgement through an authenticated 
classical channel. Upon receiving this acknowledgement, $\mathrm{Bob}_j$ announces, 
through the authenticated classical channel, the positions of the decoy qudits in 
$L_j$ and the preparation basis associated with each decoy qudit.

Peter measures the decoy qudits according to the announced bases and sends the 
measurement outcomes back to $\mathrm{Bob}_j$. Participant $\mathrm{Bob}_j$ compares 
Peter's reported outcomes with the originally prepared values and computes the 
decoy error rate $\hat{e}$. If $\hat{e}>e_{\mathrm{th}}$ ,then the channel is judged to be insecure and the protocol is aborted. Otherwise, 
the transmission is regarded as acceptable. Peter keeps the information qudits in 
the received sequence and proceeds to the subsequent secret reconstruction phase.

\subsubsection{Secret Reconstruction}

Suppose that the combiner Peter selects $k$ participants from the $n$ participants to participate in secret recovery, where $k \ge t$. Let the selected participant set be
$J = \{i_1, i_2, \ldots, i_k\} \subseteq \{1,2,\ldots,n\},$
and let its complement be denoted by
$\bar{J} = \{1,2,\ldots,n\} \setminus J.$

From the encoding relation $C = VY$, it follows that the encoded secret state can be expressed as a uniform superposition over the random variables. Omitting the normalization factor, the global state is given by
\begin{equation}
\sum_{R_{1,1}, R_{1,2}, R_2}
\left|
V_{A_1}^T S + V_{B_2}^T R_{1,1} + V_E^T R_{1,2}
\right\rangle_{[n],1}
\left|
V_{A_2}^T D_1 + V_{B_1}^T R_2
\right\rangle_{[n],2}.
\tag{25}
\label{eq.25}
\end{equation}

By rearranging the particles in the encoded secret state in Eq.~(\ref{eq.25}), we obtain
\begin{align}
\sum_{R_{1,1}, R_{1,2}, R_2}
&\left|
V_{A_1}^{(J)T} S + V_{B_2}^{(J)T} R_{1,1} + V_E^{(J)T} R_{1,2}
\right\rangle_{J,1}
\left|
V_{A_2}^{(J)T} D_1 + V_{B_1}^{(J)T} R_2
\right\rangle_{J,2}
\nonumber\\
&\otimes
\left|
V_{A_1}^{(\bar{J})T} S + V_{B_2}^{(\bar{J})T} R_{1,1} + V_E^{(\bar{J})T} R_{1,2}
\right\rangle_{\bar{J},1}
\left|
V_{A_2}^{(\bar{J})T} D_1 + V_{B_1}^{(\bar{J})T} R_2
\right\rangle_{\bar{J},2}.
\tag{26}
\label{eq.26}
\end{align}

Here, $V_{A_1}^T, V_{A_2}^T, V_{B_1}^T, V_{B_2}^T, V_E^T$ denote the corresponding block submatrices of $V$.
\medskip
\noindent\textbf{Case 1:} $(t \le k < d)$.

Let $J=\{i_1,i_2,\ldots,i_k\}\subseteq \{1,2,\ldots,n\}$ denote the set of the
selected $k$ participants, and let 
$\overline{J}=\{i_{k+1},i_{k+2},\ldots,i_n\}$ denote the complement of $J$.

Since $\left[V_{A_2}^{T},V_{B_1}^{T}\right]_{J}$ is a $k\times t$ matrix with
$k\geq t$, the combiner can select, from the second-layer qudits corresponding
to the participants in $J$, the components associated with $t$ participants so
as to form an invertible $t\times t$ submatrix, denoted by $V_1$. Then, by
applying the unitary operation $U_{V_1^{-1}}$ to the $k v_2$ qudits in Eq.~(\ref{eq.26})
and subsequently discarding the remaining qudits in the second layer, we obtain
\begin{align}
\sum_{R_{1,1},R_{1,2},R_2}
&\left|
V_{A_1}^{(J)T}S
+V_{B_2}^{(J)T}R_{1,1}
+V_E^{(J)T}R_{1,2}
\right\rangle_{J,1}
|D_1\rangle_{J,2}
|R_2\rangle_{J,2}
\nonumber\\
&\otimes
\left|
V_{A_1}^{(\bar{J})T}S
+V_{B_2}^{(\bar{J})T}R_{1,1}
+V_E^{(\bar{J})T}R_{1,2}
\right\rangle_{\bar{J},1}
\left|
V_{A_2}^{(\bar{J})T}D_1
+V_{B_1}^{(\bar{J})T}R_2
\right\rangle_{\bar{J},2}
\tag{27}
\label{eq.27}
\end{align}

By applying the known permutation to $|D_1\rangle_{J,2}$, one can recover
$|R_{1,2}\rangle_{J,2}$, and hence obtain
\begin{align}
\sum_{R_{1,1},R_{1,2},R_2}
&\left|
V_{A_1}^{(J)T}S
+V_{B_2}^{(J)T}R_{1,1}
+V_E^{(J)T}R_{1,2}
\right\rangle_{J,1}
|R_{1,2}\rangle_{J,2}
|R_2\rangle_{J,2}
\nonumber\\
&\otimes
\left|
V_{A_1}^{(\bar{J})T}S
+V_{B_2}^{(\bar{J})T}R_{1,1}
+V_E^{(\bar{J})T}R_{1,2}
\right\rangle_{\bar{J},1}
\left|
V_{A_2}^{(\bar{J})T}D_1
+V_{B_1}^{(\bar{J})T}R_2
\right\rangle_{\bar{J},2}.
\tag{28}
\label{eq.28}
\end{align}

Define the block-diagonal matrix
$W_1 =
\begin{bmatrix}
V_E^{(J)T} & 0 \\
0 & I
\end{bmatrix}.$
Clearly, $W_1$ is invertible. The combiner then applies the unitary operation
$U_{W_1}$ to the particles $|R_{1,2},R_2\rangle_{J,2}$, and subsequently applies
$\mathrm{ADD}^{\dagger}$ gates to the corresponding registers, thereby
eliminating the term $V_E^{(J)T}R_{1,2}$ from the first layer. This yields
\begin{align}
\sum_{R_{1,1},R_{1,2},R_2}
&\left|
V_{A_1}^{(J)T}S
+V_{B_2}^{(J)T}R_{1,1}
\right\rangle_{J,1}
\left|
V_E^{(\bar{J})T}R_{1,2}
\right\rangle_{J,2}
|R_2\rangle_{J,2}
\nonumber\\
&\otimes
\left|
V_{A_1}^{(\bar{J})T}S
+V_{B_2}^{(\bar{J})T}R_{1,1}
+V_E^{(\bar{J})T}R_{1,2}
\right\rangle_{\bar{J},1}
\left|
V_{A_2}^{(\bar{J})T}D_1
+V_{B_1}^{(\bar{J})T}R_2
\right\rangle_{\bar{J},2}.
\tag{29}
\label{eq.29}
\end{align}

Moreover,
$\bigl[V_{A_1}^{T},V_{B_2}^{T}\bigr]^J$
is a $k \times t$ matrix, where $k \ge t$. By its structure, there exists an
invertible $t \times t$ submatrix, denoted by $V_2$. The combiner then selects
the first-layer qudits corresponding to $V_2$ from the set $J$, and applies the
unitary operation $U_{V_2^{-1}}$ to Eq.~(\ref{eq.29}), thereby obtaining
\begin{align*}
    |S\rangle
\sum_{R_{1,1},R_{1,2},R_2}
&|R_{1,1}\rangle_{J,1}
\left|
V_E^{(\bar{J})T}R_{1,2}
\right\rangle_{J,2}
|R_2\rangle_{J,2}
\nonumber
\left|
V_{A_1}^{(\bar{J})T}S
+V_{B_2}^{(\bar{J})T}R_{1,1}
+V_E^{(\bar{J})T}R_{1,2}
\right\rangle_{\bar{J},1}\\
&\left|
V_{A_2}^{(\bar{J})T}D_1
+V_{B_1}^{(\bar{J})T}R_2
\right\rangle_{\bar{J},2}.
\tag{30}
\label{eq.30}
\end{align*}

Next, the combiner applies the unitary operation $U_{W_1^{-1}}$ to the particles,
which transforms Eq.~(\ref{eq.30})) into
\begin{align}
|S\rangle
\sum_{R_{1,1},R_{1,2},R_2}
&|R_{1,1}\rangle_{J,1}
|R_{1,2}\rangle_{J,2}
|R_2\rangle_{J,2}
\nonumber
\otimes
\left|
V_{A_1}^{(\bar{J})T}S
+V_{B_2}^{(\bar{J})T}R_{1,1}
+V_E^{(\bar{J})T}R_{1,2}
\right\rangle_{\bar{J},1}\\
&\left|
V_{A_2}^{(\bar{J})T}D_1
+V_{B_1}^{(\bar{J})T}R_2
\right\rangle_{\bar{J},2}.
\tag{31}
\label{eq.31}
\end{align}
To show that the secret is completely disentangled from the remaining system,
so that the secret register $|S\rangle$ is in a tensor-product form with the
other registers, define
$W_2 =
\begin{bmatrix}
I & 0 \\
V^{(J)T} & I
\end{bmatrix},$
where $W_2$ is invertible. The combiner then applies the unitary operation
$U_{W_2}$ to the particles $|S, R_{1,1}, R_{1,2}\rangle$. As a result, the state
in Eq.~(\ref{eq.31}) becomes
\begin{align}
|S\rangle \sum_{R_{1,2},R_2}
\Bigg(
\sum_{R_{1,1}}
&\left|
V^{(J)T}(S,R_{1,1},R_{1,2})
\right\rangle_{J,1}
\left|
V^{(J)T}(S,R_{1,1},R_{1,2})
\right\rangle_{J,1}
\Bigg)
\nonumber\\
&\otimes
|R_{1,2}\rangle_{J,2}
|R_2\rangle_{J,2}
\left|
V_{A_2}^{(\bar{J})T}D_1 + V_{B_1}^{(\bar{J})T}R_2
\right\rangle_{\bar{J},2}.
\tag{32}
\label{eq.32}
\end{align}

Let $f \in \mathbb{F}_p^{\,n-k}$. Then the inner summation in Eq.~~(\ref{eq.32}) can be
rewritten as
\[
\sum_{f \in \mathbb{F}_p^{\,n-k}} |f\rangle |f\rangle.
\]
Therefore, the encoded secret state can be rewritten as
\begin{equation}
|S\rangle
\sum_{f \in \mathbb{F}_p^{\,n-k}} |f\rangle |f\rangle
\sum_{R_{1,2},R_2}
|R_{1,2}\rangle_{J,2}
|R_2\rangle_{J,2}
\left|
V_{A_2}^{(\bar{J})T}D_1 + V_{B_1}^{(\bar{J})T}R_2
\right\rangle_{\bar{J},2}.
\tag{33}
\label{eq.33}
\end{equation}

Hence, the secret has been completely disentangled from the rest of the system.
This conclusion also holds for an arbitrary superposition secret. Therefore, the
combiner can preserve the coherence of the secret state and correctly recover
the quantum secret.

\medskip
\noindent\textbf{Case 2:} $(k \ge d)$.

When the number of participants involved in recovery satisfies $k \ge d$, the
combiner Peter need not collect the second-layer qudits. Instead, it suffices for
the recovering participants $\mathrm{Bob}_{i_k}$ to send only the first-layer
secret information to the combiner, thereby reducing the quantum communication
cost in the reconstruction phase. The specific recovery procedure is as follows.

Ignoring the normalization factor, the first-layer encoded state can be written
as
\[
\sum_{R_{1,1},R_{1,2}}
\left|
V^T(S,R_{1,1},R_{1,2})
\right\rangle_{[n],1}.
\]

Let
$V_3 = \bigl[V_{A_1}^T, V_{B_2}^T, V_E^T\bigr],$
where $V_3$ is an $n \times d$ matrix. Then
\begin{equation}
\sum_{R_{1,1},R_{1,2},R_2}
\left|
V_3^{(J)}(S,R_{1,1},R_{1,2})
\right\rangle_{J,1}
\left|
V_{A_1}^{(\bar{J})T}S + V_{B_2}^{(\bar{J})T}R_{1,1}
+ V_E^{(\bar{J})T}R_{1,2}
\right\rangle_{\bar{J},1}.
\tag{34}
\label{eq.34}
\end{equation}

The combiner arbitrarily selects the first-layer qudits corresponding to any
$d$ participants among those involved in secret recovery, and lets
\[
(V_3^{(J)})^T =
\bigl[(V_3^{(J)})_{A_1}^T, (V_3^{(J)})_{B_2}^T, (V_3^{(J)})_E^T\bigr].
\]
By construction, $(V_3^{(J)})^T$ is a $d \times d$ Vandermonde matrix and is
therefore invertible. The combiner then applies the unitary operation
$U_{V_3^{-1}}$ to the $k v_1$ first-layer qudits in Eq.~(\ref{eq.34}), obtaining
\begin{equation}
\sum_{R_{1,1},R_{1,2},R_2}
|S\rangle
|R_{1,1}\rangle_{J,1}
|R_{1,2}\rangle_{J,1}
\left|
V_{A_1}^{(\bar{J})T}S + V_{B_2}^{(\bar{J})T}R_{1,1}
+ V_E^{(\bar{J})T}R_{1,2}
\right\rangle_{\bar{J},1}.
\tag{35}
\label{eq.35}
\end{equation}

Now define
$W_3 =
\begin{bmatrix}
I & 0 \\
V_3^{(\bar{J})T} & I
\end{bmatrix}.$
The combiner applies the unitary operation $U_{W_3}$ to
$|S,R_{1,1},R_{1,2}\rangle_{J,1}$. Then the state in Eq.~(\ref{eq.35}) becomes
\begin{equation}
\sum_{R_{1,1},R_{1,2},R_2}
|S\rangle
\left|
V_3^{(J)}(S,R_{1,1},R_{1,2})
\right\rangle_{J,1}
\left|
V_{A_1}^{(\bar{J})T}S + V_{B_2}^{(\bar{J})T}R_{1,1}
+ V_E^{(\bar{J})T}R_{1,2}
\right\rangle_{\bar{J},1}.
\tag{36}
\label{eq.36}
\end{equation}

Let $f \in \mathbb{F}_p^{\,n-k}$. Then the term in Eq.~~(\ref{eq.36}) can be rewritten as
$\sum_{f \in \mathbb{F}_p^{\,n-k}} |f\rangle |f\rangle.$
Accordingly, the encoded secret state can be rewritten as
\begin{equation}
|S\rangle
\sum_{f \in \mathbb{F}_p^{\,n-k}} |f\rangle |f\rangle
\sum_{R_{1,2}}
|R_{1,2}\rangle_{J,1}.
\tag{37}
\label{eq.37}
\end{equation}

The recovered state $|S\rangle$ at this stage is precisely the classically
encrypted quantum secret state $|\tilde{\psi}\rangle$.

The authorized set participating in the recovery first uses its classical shares
to reconstruct the quantum pre-encryption key $K=(\bm{a},\bm{b})$. Peter then
reconstructs the encrypted quantum secret state $|S\rangle$ according to the
above recovery algorithm, and finally applies the inverse transformation
$U_K^{-1} = \bigotimes_{l=1}^{m} X_l^{-a^l} Z_l^{-b^l}$
to recover the original quantum secret state $|m\rangle$.
\section{Construction of the $(4,6,5;2)_7$ QSS Scheme}

Section~4 presented the general encoding and reconstruction framework of the
$\left((t,n,d;z)\right)_q$ scheme. To make the invertibility of the Vandermonde
submatrices involved in the reconstruction process, the elimination of random
offsets, and the disentanglement between the secret and auxiliary systems more
transparent, this section provides a complete example under the parameter setting
$\left((t,n,d;z)\right)_q=\left((4,6,5;2)\right)_7 .$
In this section, we only discuss the encryption and reconstruction phases. The
preparation, distribution, and detection phases are the same as those described
above and are therefore omitted here.

\subsection{Intermediate-Set Leakage Analysis}

Let the secret vector be
$\bm{m}=(s_1,s_2,s_3,s_4,s_5,s_6),$
where $a_1=3, a_2=2, b_0=5, b_1=2, b_2=1, e=1,v_1=2, v_2=1.$
Under this parameter setting, the quantum secret to be shared consists of six
qudits, and its secret vector is denoted by $\bm{m}$. Each participant finally
holds a quantum share consisting of three qudits. Take
\[
V^T =
\begin{bmatrix}
1 & 1 & 1 & 1 & 1 \\
1 & 2 & 4 & 1 & 2 \\
1 & 3 & 2 & 6 & 4 \\
1 & 4 & 2 & 1 & 4 \\
1 & 5 & 4 & 6 & 2 \\
1 & 6 & 1 & 6 & 1
\end{bmatrix},
\qquad
Y =
\begin{bmatrix}
s_1 & s_2 & 0 \\
s_3 & s_4 & r_3 \\
s_5 & s_6 & r_4 \\
r_1 & r_2 & r_5 \\
r_3 & r_4 & r_6
\end{bmatrix}.
\]

According to the encoding rule, we have
\[
|m\rangle \mapsto
\sum_{r \in \mathbb{F}_7^6}
\bigotimes_{i=1}^{6} |c_{i1} c_{i2} c_{i3}\rangle.
\]

In order to facilitate the analysis of the linear relationship
between the secret variables and the random variables contained in each
participant's share, the above encoding can also be equivalently written in
matrix form as
\[
\ket{\bm{m}}
\longmapsto
\sum_{\bm{r}\in \mathbb{F}_7^{6}}
\ket{V^{T}Y_1}\otimes \ket{V^{T}Y_2},
\]
where
\[
Y_1=
\begin{bmatrix}
s_1 & s_2\\
s_3 & s_4\\
s_5 & s_6\\
r_1 & r_2\\
r_3 & r_4
\end{bmatrix},
\qquad
Y_2=
\begin{bmatrix}
0\\
r_3\\
r_4\\
r_5\\
r_6
\end{bmatrix}.
\]
Here, $V^{T}Y_1$ corresponds to the first-layer registers of the quantum shares
held by the participants, while $V^{T}Y_2$ corresponds to the second-layer
registers of the quantum shares held by the participants.

Now consider the participant set $\{P_1,P_2,P_4\}$. In this case,
$x_1=1$, $x_2=2$, and $x_4=4$. For $l=1,2$, we have
$7 \mid x_1+x_2+x_4$
and
$7 \mid x_1^2+x_2^2+x_4^2+x_1x_4+x_1x_2+x_2x_4 .$
Therefore, by Theorem~\ref{thm2}, the joint quantum share held by this participant set
can be processed through reversible linear elimination, which reveals the qudit
information associated with $r_4$. By using the generalized $\mathrm{ADD}$ gate
and its inverse to eliminate $r_4$, we obtain the first-layer register.
\[
\sum_{r \in \mathbb{F}_7^4}
\left|
\begin{bmatrix}
1 & 1 & 1 & 1 & 0 \\
1 & 2 & 4 & 1 & 0 \\
1 & 4 & 2 & 1 & 0
\end{bmatrix}
\begin{bmatrix}
s_2 \\ s_4 \\ s_6 \\ r_2 \\ r_4
\end{bmatrix}
\right\rangle.
\]
Since $7 \mid x_1+x_2+x_4$,
by Theorem~\ref{thm2}, some linear combinations in the corresponding registers no longer
depend on all the random variables, but instead contain nontrivial information
about the secret components $s_4$ and $s_6$. Hence, an intermediate set exists.

This indicates that intermediate-set leakage indeed occurs in the original
Non-PQSS construction. To eliminate this security risk, we introduce a classical
$(4,6)$ threshold sharing layer on top of the original scheme. This classical
layer is used to protect the classical key required for pre-encrypting the
quantum secret, thereby yielding a hybrid perfect quantum secret sharing scheme.

\subsection{Distribution Phase of the $\left((4,6,5;2)\right)_7$ Scheme}
Alice first chooses two classical pre-encryption keys of length $6$:
$\bm{a}=(2,4,1,6,3,5)\in \mathbb{F}_7^6,
\bm{b}=(1,3,2,5,4,6)\in \mathbb{F}_7^6,$
which are used as the $X$-type and $Z$-type keys, respectively, for the
pre-encryption of the quantum secret. Alice then applies
$U_K=\bigotimes_{i=1}^{6} X_i^{a_i}Z_i^{b_i}$
to the original quantum state $\ket{\bm{m}}$, obtaining the classically
pre-encrypted quantum secret state $\ket{\bm{s}}$.

It should be emphasized that $\bm{a}$ and $\bm{b}$ are not disclosed directly.
Instead, they are distributed to the participants through a classical $(4,6)$
threshold secret sharing scheme. In this way, even if some intermediate sets can
obtain partial information about $\ket{\bm{s}}$ from their quantum shares, they
cannot recover the original quantum secret state $\ket{\bm{m}}$ without knowing
the complete pre-encryption key.

More specifically, for $\bm{a}$, Alice constructs polynomials of degree at most
$3$ satisfying $f_j(0)=a_j$:
\[
\begin{aligned}
f_1(x) &= 2+x+2x^2+3x^3, 
&\qquad 
f_2(x) &= 4+2x+x^2+4x^3,\\
f_3(x) &= 1+3x+x^3, 
&\qquad 
f_4(x) &= 6+x+x^2+x^3,\\
f_5(x) &= 3+2x^2+5x^3, 
&\qquad 
f_6(x) &= 5+2x+3x^2+x^3.
\end{aligned}
\]
For $\bm{b}$, Alice constructs polynomials of degree at most $3$ satisfying
\[
\begin{aligned}
g_1(x) &= 1+x+2x^3,
&\qquad
g_2(x) &= 3+x^2+3x^3,\\
g_3(x) &= 2+2x+x^2,
&\qquad
g_4(x) &= 5+3x+2x^2+x^3,\\
g_5(x) &= 4+x+3x^2+2x^3,
&\qquad
g_6(x) &= 6+2x+x^3.
\end{aligned}
\]
Thus, for each participant $P_i$ $(i=1,\ldots,6)$, Alice sends the corresponding
classical threshold share
\[
c_i=\bigl(f_1(i),\ldots,f_6(i),g_1(i),\ldots,g_6(i)\bigr).
\]
These classical shares will be used in the authorized reconstruction phase to
recover the pre-encryption keys $(\bm{a},\bm{b})$.

On the other hand, Alice further encodes the pre-encrypted quantum secret state
$\ket{\bm{s}}$ according to the ECSS encoding rule 
\[
\ket{\bm{s}}
\longmapsto
\sum_{\bm{r}\in \mathbb{F}_7^6}
\bigotimes_{i=1}^{6}
\ket{c_{i1}c_{i2}c_{i3}},
\]
thereby obtaining the complete set of encoded quantum shares. Alice then sends
the quantum share corresponding to participant $P_i$, together with the classical
threshold share $c_i$, to $P_i$.

For clarity in the subsequent reconstruction procedure, the qudit registers
highlighted in bold in the following formulas indicate the quantum shares that
Peter has already collected from the participants and used in the corresponding
unitary operations.
\subsection{Reconstruction Phase of the $\left((4,6,5;2)\right)_7$ Scheme}

\noindent\textbf{Case 1: \(k=6\).} since $V^{T}$ is a $6\times 5$ matrix, the combiner may choose the
first-layer qudits sent by any five participants. Without loss of generality,
let the participating set be
$J=\{1,2,3,4,5\}.$
Then $V_{[5]}$ is a $5\times 5$ Vandermonde matrix. The combiner applies
$U_{V_{[5]}^{-1}}$ to the following registers:
\[
\left|
\bm{V}_{[5]}^{T}
\left(\bm{s}_1,\bm{s}_3,\bm{s}_5,\bm{r}_1,\bm{r}_3\right)
\right\rangle
\left|
\bm{V}_{[5]}^{T}
\left(\bm{s}_2,\bm{s}_4,\bm{s}_6,\bm{r}_2,\bm{r}_4\right)
\right\rangle 
\]
This yields
\[
\ket{\bm{s}_1\bm{s}_2\bm{s}_3\bm{s}_4\bm{s}_5\bm{s}_6}
\sum_{\bm{r}\in \mathbb{F}_7^6}
\ket{\bm{r}_1,\bm{r}_3}\otimes \ket{\bm{r}_2,\bm{r}_4}.
\]

\noindent\textbf{Case 2: \(k=5\).},assume, without loss of generality, that the participating set is
$\{1,2,3,4,5\}$. The qudit state can be written as
\[
\sum_{\bm{r}\in \mathbb{F}_7^6}
\ket{\bm{V}_{[5]}^{T}(\bm{s}_1,\bm{s}_3,\bm{s}_5,\bm{r}_1,\bm{r}_3)}
\ket{\bm{V}_{[5]}^{T}(\bm{s}_2,\bm{s}_4,\bm{s}_6,\bm{r}_2,\bm{r}_4)}
\ket{V_{[6,6]}^{T}(s_1,\bm{s}_3,s_5,s_1,s_3)}
\ket{V_{[6,6]}^{T}(s_2,\bm{s}_4,s_6,s_2,s_4)} .
\]
In this case, $V_{[5]}$ is a $5\times 5$ Vandermonde matrix and is therefore
invertible. Applying $U_{V_{[5]}^{-1}}$ to the first two registers gives
\[
\ket{\bm{s}_1\bm{s}_2\bm{s}_3\bm{s}_4\bm{s}_5\bm{s}_6}
\sum_{\bm{r}\in \mathbb{F}_7^6}
\ket{\bm{r}_1,\bm{r}_3)}\ket{\bm{r}_2,\bm{r}_4)}
\ket{V_{[6,6]}^{T}(s_1,s_3,s_5,r_1,r_3)}
\ket{V_{[6,6]}^{T}(s_2,s_4,s_6,r_2,r_4)} .
\]

To further disentangle the remaining registers, let
$W_3=
\begin{bmatrix}
1 & 0 & 0 & 0 & 0\\
0 & 1 & 0 & 0 & 0\\
0 & 0 & 1 & 0 & 0\\
0 & 0 & 0 & 1 & 0\\
\multicolumn{5}{c}{V_{[6,6]}^{T}}
\end{bmatrix}.$
The combiner applies two unitary operations $U_{W_3}$ to the registers
$\ket{s_1,s_3,s_5,r_1,r_3}$ and $\ket{s_2,s_4,s_6,r_2,r_4}$, respectively. The
state is then transformed into
\[
\begin{aligned}
&\ket{\bm{s}_1\bm{s}_2\bm{s}_3\bm{s}_4\bm{s}_5\bm{s}_6}
\sum_{(r_1,r_2)\in\mathbb{F}_7^2}\ket{\bm{r}_1}\ket{\bm{r}_2}
\otimes
\sum_{r_3\in\mathbb{F}_7}
\ket{\bm{V}_{[6,6]}^{T}(\bm{s}_1,\bm{s}_3,\bm{s}_5,\bm{r}_1,\bm{r}_3)}
\ket{V_{[6,6]}^{T}(s_1,s_3,s_5,r_1,r_3)} \\
&\qquad\qquad\qquad\qquad\qquad\qquad\quad \otimes
\sum_{r_4\in\mathbb{F}_7}
\ket{\bm{V}_{[6,6]}^{T}(\bm{s}_2,\bm{s}_4,\bm{s}_6,\bm{r}_2,\bm{r}_4)}
\ket{V_{[6,6]}^{T}(s_2,s_4,s_6,r_2,r_4)} .
\end{aligned}
\]
\noindent\textbf{Case 3: \(k=4\).}
Let the participating set be
$\mathcal{A}=\{P_1,P_2,P_3,P_4\}.$
In this case, the quantum state can be represented in two layers as
\begin{align*}
|\Psi_4\rangle
=
\sum_{r\in\mathbb{F}_7^4}
&\left|\bm{V}_{[4]}^{T}(\bm{s}_1,\bm{s}_3,\bm{s}_5,\bm{r}_1,\bm{r}_3)\right\rangle
 \left|\bm{V}_{[4]}^{T}(\bm{s}_2,\bm{s}_4,\bm{s}_6,\bm{r}_2,\bm{r}_4)\right\rangle       \\
&\otimes
 \left|V_{[5,6]}^{T}(s_1,s_3,s_5,r_1,r_3)\right\rangle
 \left|V_{[5,6]}^{T}(s_2,s_4,s_6,r_2,r_4)\right\rangle ,
\end{align*}
and
\[
\sum_{r\in\mathbb{F}_7^4}
\left|\bm{V}_{[4]}^{T}(0,\bm{r}_3,\bm{r}_4,\bm{r}_5,\bm{r}_6)\right\rangle
\left|V_{[5,6]}^{T}(0,r_3,r_4,r_5,r_6)\right\rangle .
\]

The combiner first applies the unitary operation corresponding to
\((V_{[4]}^{T})^{-1}\) to
\[
\sum_{r\in\mathbb{F}_7^4}
\left|\bm{V}_{[4]}^{T}(0,\bm{r}_3,\bm{r}_4,\bm{r}_5,\bm{r}_6)\right\rangle
\left|V_{[5,6]}^{T}(0,r_3,r_4,r_5,r_6)\right\rangle .
\]
This yields
\[
\sum_{r\in\mathbb{F}_7^4}
|r_3,r_4,r_5,r_6\rangle
\left|V_{[5,6]}^{T}(0,r_3,r_4,r_5,r_6)\right\rangle .
\]

Next, the combiner applies \(V_{[4]}^{T}\) to
$(0,0,0,0,r_3)
\quad \text{and} \quad
(0,0,0,0,r_4),$
and then uses \(\operatorname{ADD}^{\dagger}\) gates to eliminate the coherence
between the random variables \(r_3,r_4\) and the secret variables. This gives
\begin{equation}\label{eq.38}
\begin{aligned}
|\Psi_4'\rangle
=
\sum_{r\in\mathbb{F}_7^4}
&\left|(\bm{V}_{[4]}^{[4]})^{T}(\bm{s}_1,\bm{s}_3,\bm{s}_5,\bm{r}_1)\right\rangle
 \left|(\bm{V}_{[4]}^{[4]})^{T}(\bm{s}_2,\bm{s}_4,\bm{s}_6,\bm{r}_2)\right\rangle    \\
&\otimes
 \left|V_{[5,6]}^{T}(s_1,s_3,s_5,r_1,r_3)\right\rangle
 \left|V_{[5,6]}^{T}(s_2,s_4,s_6,r_2,r_4)\right\rangle .
\end{aligned}
\tag{38}
\end{equation}

Since \((V_{[4]}^{[4]})^{T}\) is a \(4\times 4\) Vandermonde matrix, it is
invertible. Applying the unitary operation corresponding to
\(\bigl((V_{[4]}^{[4]})^{T}\bigr)^{-1}\) to Eq.~\eqref{eq.38}
gives
\begin{equation}\label{eq:39}
\begin{aligned}
|\Psi_4''\rangle
=
|\bm{s}_1\bm{s}_2\bm{s}_3\bm{s}_4\bm{s}_5\bm{s}_6\rangle
\sum_{r\in\mathbb{F}_7^4}
& |\bm{r}_1\rangle |\bm{r}_2\rangle                                      
\left|V_{[5,6]}^{T}(s_1,s_3,s_5,r_1,r_3)\right\rangle\\
 &\otimes
 \left|V_{[5,6]}^{T}(s_2,s_4,s_6,r_2,r_4)\right\rangle .
\end{aligned}
\tag{39}
\end{equation}

Define
$W_4=
\begin{bmatrix}
I_{3\times 3} & 0\\
0 & V_{[5,6]}^{T}
\end{bmatrix}.$
Using the second-layer particles \(|r_3\rangle\) and \(|r_4\rangle\), the
combiner applies the unitary operation \(U_{W_4}\) to the registers
$|s_1,s_3,s_5,r_1,r_3\rangle
~ \text{and} ~
|s_2,s_4,s_6,r_2,r_4\rangle $.

As a result, one obtains a uniform superposition
$\sum_{f_1\in\mathbb{F}_7}|f_1\rangle$
independent of \(s_1,s_3,s_5\), and similarly a uniform superposition
$\sum_{f_2\in\mathbb{F}_7}|f_2\rangle$
independent of \(s_2,s_4,s_6\). After a change of variables and rewriting, we obtain
\[
|\Psi_4\rangle
=
|\bm{s}_1\bm{s}_2\bm{s}_3\bm{s}_4\bm{s}_5\bm{s}_6\rangle
\sum_{f_1\in\mathbb{F}_7}|\bm{f}_1\rangle |f_1\rangle
\sum_{f_2\in\mathbb{F}_7}|\bm{f}_2\rangle |f_2\rangle .
\]

At this point, the classically pre-encrypted secret state has been completely
disentangled from the remaining part of the system. By linearity, the same
reconstruction procedure also applies to an arbitrary superposition state.

In the classical reconstruction phase, assume without loss of generality that
the authorized set is
$\{P_1,P_2,P_4,P_6\}.$
The corresponding Lagrange coefficients over $\mathbb{F}_7$ are
$(\lambda_1,\lambda_2,\lambda_4,\lambda_6)=(6,4,1,4).$
By using Lagrange interpolation, the authorized set can recover the classical
key
$K=(\bm{a},\bm{b}).$
Then Peter applies the inverse pre-encryption operation
$U_K^{-1}
=
\bigotimes_{\ell=1}^{6} Z_\ell^{-b_\ell}X_\ell^{-a_\ell}$
to the quantum state $\ket{\bm{s}}$, and obtains the original secret quantum
state $\ket{\bm{m}}.$

\begin{remark}
This example shows that intermediate-set leakage may indeed occur in the original
Non-PQSS scheme constructed from ECSS codes. To address this problem, the present
paper combines the original Non-PQSS scheme with a classical $(4,6)$ threshold
sharing mechanism. The classical threshold-sharing layer masks the information
that may be leaked to intermediate sets, thereby ensuring that unauthorized sets
obtain no information about the original secret. Only participant sets satisfying
the authorization condition can recover both the required quantum shares and the
classical pre-encryption key, and can finally reconstruct the original secret
state by applying the inverse transformation. Consequently, the original
Non-PQSS scheme is transformed into a perfect quantum secret sharing scheme,
while retaining the communication efficiency of the original scheme in the
quantum reconstruction phase.
\end{remark}

\section{Performance Analysis}
\subsection{Cost Model and Evaluation Metrics}

To evaluate the communication efficiency of the proposed scheme, we mainly use
the amount of quantum communication as the primary performance metric, measured
uniformly in terms of the number of qudits. The reason for focusing on quantum
communication is that, in quantum secret sharing, quantum channel resources are
usually much more expensive than classical channel resources. Since the proposed
scheme introduces a classical threshold-sharing layer on top of the original
communication-efficient QSS layer to protect the pre-encryption key, this
additional layer inevitably brings extra classical communication overhead.
However, in order to remain consistent with the quantum communication cost
metrics commonly used in the existing CE-QSS literature, the cost analysis below
only counts the quantum communication cost associated with the considered
authorized set.

Let the secret size be $m$ qudits, and let $n$ denote the total number of
participants. In the reconstruction phase, suppose that the authorized set
participating in secret recovery has size $k$, where
$t\leq k\leq n.$
Let $B_{\mathrm{test}}$ denote the number of test code blocks used in the
distribution phase, and let $N$ denote the number of decoy particles inserted by
each participant in the reconstruction phase. In addition, $v_1$ and $v_2$
represent the numbers of first-layer and second-layer information particles
received by each participant in the distribution phase, respectively.

The quantum communication efficiency of the proposed scheme is denoted by
$\eta$ and is defined as
$\eta=\frac{c}{q},$
where $c$ denotes the total number of qudits contained in the secret, and $q$
denotes the total number of qudits transmitted in the scheme. In this paper, the
communication cost only counts the quantum communication associated with the
considered authorized set, measured in qudits. The classical communication
introduced by the classical threshold-sharing layer is not included in this
quantum communication cost metric.
\subsection{Quantum Communication Cost of $k$ Participants in Different Phases}

Under the pre-specified reconstruction-set model described in Section~4.1, let
the authorized set used for secret recovery in the current round have size $k$,
where
$t\leq k\leq n.$
In the distribution phase, Alice prepares the encoded secret state and
$B_{\mathrm{test}}$ test code blocks, and distributes the information particles
to the $k$ selected participants. Specifically, Alice sends
$k(v_1+v_2)$ information particles and $kB_{\mathrm{test}}a_1$ test particles.
Thus, the number of particles required in the distribution phase is
$k(v_1+v_2)+kB_{\mathrm{test}}a_1.$

In the reconstruction phase, the participants
$\mathrm{Bob}_{i_1},\mathrm{Bob}_{i_2},\ldots,\mathrm{Bob}_{i_k}$ prepare a total
of $kN$ decoy particles. Therefore, the total number of particles required by the
proposed scheme is
$k(v_1+v_2)+kB_{\mathrm{test}}a_1+kN.$
Accordingly, the quantum communication efficiency is given by
\begin{equation}
\eta
=
\frac{m}{k(v_1+v_2)+kB_{\mathrm{test}}a_1+kN}.
\tag{40}
\label{eq.40}
\end{equation}

Suppose first that the secret is reconstructed by a set of $d_i$ participants, where $n\geq d_i\geq d$. Under the pre-announced reconstruction model, Alice distributes quantum systems only to these $d_i$ participants. Hence, the distribution phase involves $d_i(v_1+v_2)$ information particles and $d_iB_{\mathrm{test}}a_1$ test particles, and the total number of transmitted particles in this phase is
$d_i(v_1+v_2)+d_iB_{\mathrm{test}}a_1 .$
During the reconstruction phase, the same $d_i$ participants send $d_iv_1$ information particles together with $d_iN$ decoy particles to Peter. Therefore, the reconstruction-phase quantum transmission is $d_iv_1+d_iN .$

Now suppose that the reconstruction is performed by $t_i$ participants with $d>t_i\geq t$, and that the secret consists of $m$ qudits. In the distribution phase, Alice transmits $t_i(v_1+v_2)$ information particles and $t_iB_{\mathrm{test}}a_1$ test particles to the selected participants. Thus, the total number of particles distributed in this phase is
$t_i(v_1+v_2)+t_iB_{\mathrm{test}}a_1 $.
In the reconstruction phase, the $t_i$ participants transmit $t_i(v_1+v_2)$ information particles and $t_iN$ decoy particles to Peter, resulting in a total of
$t_i(v_1+v_2)+t_iN$
transmitted particles.

In the following analysis, the communication cost refers to the number of quantum particles transmitted in the corresponding phase.
Based on the above cost model, the average quantum communication cost required
to recover each secret qudit can be evaluated from the following perspectives:
\begin{enumerate}
    \item $C_1$: the average quantum communication cost per recovered secret
    qudit in both the distribution and reconstruction phases;

    \item $C_2$: the average quantum communication cost per recovered secret
    qudit in the secret reconstruction phase;

    \item $C_3$: the average quantum communication cost per recovered secret
    qudit in the secret reconstruction phase, excluding decoy particles.
\end{enumerate}

\subsection{Quantum Communication Cost}

Under the pre-designated authorized-set model, suppose the authorized set size is $k$.

In the distribution phase, Alice transmits $k(v_1+v_2)$ information particles and
$k B_{\text{test}} a_1$ test particles.

In the reconstruction phase, participants send $k(v_1+v_2)$ information particles
and $kN$ decoy particles.
Thus, the total number of transmitted particles is
$k(v_1+v_2) + k B_{\text{test}} a_1 + kN,$
and the communication efficiency is given by
$\eta = \frac{m}{k(v_1+v_2) + k B_{\text{test}} a_1 + kN}.$

For $d_i$ participants, where $n \ge d_i \ge d$, the total number of particles is
$d_i(v_1+v_2) + d_i B_{\text{test}} a_1 + d_i N.$

For $t_i$ participants, where $d > t_i \ge t$, the total number is
$t_i(v_1+v_2) + t_i B_{\text{test}} a_1 + t_i N.$

Accordingly, the quantum communication cost can be evaluated under five metrics.
The resulting expressions are summarized in Table~\ref{tab:communication_cost}.
\begin{table}[htbp]
\centering
\caption{Communication efficiency and communication costs of the proposed
$\left((t,n,d,z)\right)$ scheme}
\label{tab:communication-cost}

\scriptsize
\renewcommand{\arraystretch}{2.2}
\setlength{\tabcolsep}{4pt}

\begin{tabular}{c|c|c}
\hline
\textbf{Metric}
&
\textbf{$d_i$ participants}
&
\textbf{$t_i$ participants}
\\
\hline
Secret size
&
$m$
&
$m$
\\
\hline
$\eta$
&
$\displaystyle
\frac{m}{d_i(v_1+v_2)+d_iB_{\mathrm{test}}a_1+d_iN}
$
&
$\displaystyle
\frac{m}{t_i(v_1+v_2)+t_iB_{\mathrm{test}}a_1+t_iN}
$
\\
\hline
$C_1$
&
$\displaystyle
\frac{d_i(v_1+v_2)+d_iB_{\mathrm{test}}a_1+d_iv_1+d_iN}{m}
$
&
$\displaystyle
\frac{2t_i(v_1+v_2)+t_iB_{\mathrm{test}}a_1+t_iN}{m}
$
\\
\hline
$C_2$
&
$\displaystyle
\frac{d_iv_1+d_iN}{m}
$
&
$\displaystyle
\frac{t_i(v_1+v_2)+t_iN}{m}
$
\\
\hline
$C_3$
&
$\displaystyle
\frac{d_iv_1}{m}
$
&
$\displaystyle
\frac{t_i(v_1+v_2)}{m}
$
\\
\hline
\end{tabular}

\vspace{1mm}
\footnotesize
\noindent
Note: $d\leq d_i<n$, $t\leq t_i<d$, and $d_i,t_i$ are integers.
\end{table}

We compare the communication cost $C_3$ of the same scheme under different 
numbers of participating parties. Specifically, when the combiner contacts 
$d$ participants for secret reconstruction, the communication cost per secret 
qudit is
\[
CC_n(d)
=
\frac{d v_1}{m}
=
\frac{\dfrac{d(t-z)}{\gcd(t-z,d-t)}}{m}.
\]
On the other hand, when only $t$ participants are contacted, the corresponding
communication cost is
\[
CC_n(t)
=
\frac{t(v_1+v_2)}{m}
=
\frac{\dfrac{t(d-z)}{\gcd(t-z,d-t)}}{m}.
\]
Since
$d(t-z)<t(d-z),$
we have
\[
CC_n(d)
=
\frac{\dfrac{d(t-z)}{\gcd(t-z,d-t)}}{m}
<
\frac{\dfrac{t(d-z)}{\gcd(t-z,d-t)}}{m}
=
CC_n(t).
\]

Therefore, within the same scheme, when the combiner reconstructs the secret by 
contacting $d$ participants, the quantum communication cost required per unit 
secret is lower than that required when contacting only $t$ participants. That 
is,
$CC_n(d)<CC_n(t),$
which demonstrates the communication efficiency of the proposed scheme in the 
secret reconstruction phase.
\subsection{Communication Efficiency Analysis When $d$ Participants Recover Secrets of Different Sizes}

We now analyze the communication efficiency when different numbers of 
participants are involved in secret reconstruction. For fixed $t$ and $z$, since 
$n=t+z$, the parameters of the $\left((t,n,d;z)\right)$ quantum secret sharing 
scheme are determined once $d$ is specified. Moreover, the secret size is
$m=a_1v_1=\operatorname{lcm}(d-z,t-z).$
When the secret is recovered by $d$ participants, different values of $d$ may 
correspond to different recoverable secret sizes. We therefore analyze the effect 
of $d$ on $C_1$, $C_2$, $C_3$, and $\eta$ in the proposed scheme.

Assume that a secret of size $m_i$ can be efficiently recovered by $d$ 
participants. Since
$m_i=a_1v_1,$
the reconstruction communication cost excluding decoy particles is
$C_3
=
\frac{dv_1}{a_1v_1}
=
\frac{d}{a_1}.$
By the parameter relation $a_1=d-z,$ we have
$C_3
=
\frac{d}{d-z}
=
\frac{1}{1-\frac{z}{d}}.$
Thus, for fixed $z$, $C_3$ decreases as $d$ increases. This shows that 
contacting more participants can reduce the reconstruction communication cost 
per secret qudit when decoy particles are not counted.

Next, suppose that $\gcd(t-z,d-t)$ remains fixed while $d$ increases. In this 
case,
$v_1=\frac{t-z}{\gcd(t-z,d-t)}$
is also fixed. The reconstruction communication cost including decoy particles is
\[
C_2
=
\frac{dv_1+dN}{a_1v_1}
=
\left(1+\frac{N}{v_1}\right)C_3.
\]
Since $v_1$ is constant under the above assumption, the factor 
$\left(1+\frac{N}{v_1}\right)$ is also constant. Therefore, $C_2$ decreases 
with the increase of $d$ as well.

We now analyze the relationship between $C_1$ and $d$. We have
\begin{align}
C_1
&=
\frac{d(v_1+v_2)+dB_{\mathrm{test}}a_1+dv_1+dN}{m}
\nonumber\\
&=
\frac{
d\dfrac{d-z}{\gcd(t-z,d-t)}
+dB_{\mathrm{test}}(d-z)
+d\dfrac{t-z}{\gcd(t-z,d-t)}
+dN
}{
(d-z)\dfrac{t-z}{\gcd(t-z,d-t)}
}
\nonumber\\
&=
\frac{
\bigl(1+B_{\mathrm{test}}\gcd(t-z,d-t)\bigr)(d-z)
+\bigl(1+B_{\mathrm{test}}\gcd(t-z,d-t)\bigr)z
}{
t-z
}
\nonumber\\
&\quad
+
\frac{
\bigl(N\gcd(t-z,d-t)+(t-z)\bigr)
\left(1+\dfrac{z}{d-z}\right)
}{
t-z
}.
\tag{41}
\label{eq.41}
\end{align}

By applying the basic inequality to Eq.~(\ref{eq.41}), we obtain
\[
C_1
\ge
2
\sqrt{
\bigl(1+B_{\mathrm{test}}\gcd(t-z,d-t)\bigr)
\bigl(N\gcd(t-z,d-t)+(t-z)\bigr)z
}.
\]

Equality holds if and only if
\[
d-z
=
\sqrt{
\frac{
\bigl(N\gcd(t-z,d-t)+(t-z)\bigr)z
}{
1+B_{\mathrm{test}}\gcd(t-z,d-t)
}
}
\ge
\sqrt{
\frac{
N\gcd(t-z,d-t)z
}{
\bigl(1+B_{\mathrm{test}}\bigr)\gcd(t-z,d-t)
}
}
=
\sqrt{\frac{Nz}{1+B_{\mathrm{test}}}}.
\]
We now analyze the relationship between $\eta$ and $d$. We have
\[
\begin{aligned}
\eta
&=
\frac{m}{d(v_1+v_2)+dB_{\mathrm{test}}a_1+dN}
=
\frac{
(d-z)\dfrac{t-z}{\operatorname{gcd}(t-z,d-t)}
}{
d\dfrac{d-z}{\operatorname{gcd}(t-z,d-t)}
+dB_{\mathrm{test}}(d-z)+dN
} \\[2mm]
&=
\frac{
\dfrac{t-z}{\operatorname{gcd}(t-z,d-t)}
}{
(d-z)\left(\dfrac{1}{\operatorname{gcd}(t-z,d-t)}+B_{\mathrm{test}}\right)
+z\left(\dfrac{1}{\operatorname{gcd}(t-z,d-t)}+B_{\mathrm{test}}\right)
+\dfrac{z}{d-z}N+N
}.
\end{aligned}
\]
The corresponding equality condition is
\[
d-z
=
\sqrt{
\frac{N\,\operatorname{gcd}(t-z,d-t)\,z}
{1+B_{\mathrm{test}}\operatorname{gcd}(t-z,d-t)}
}.
\]
Moreover,
\[
\sqrt{
\frac{N\,\operatorname{gcd}(t-z,d-t)\,z}
{1+B_{\mathrm{test}}\operatorname{gcd}(t-z,d-t)}
}
\geq
\sqrt{
\frac{N\,\operatorname{gcd}(t-z,d-t)\,z}
{(1+B_{\mathrm{test}})\operatorname{gcd}(t-z,d-t)}
}
=
\sqrt{
\frac{Nz}{1+B_{\mathrm{test}}}
}.
\]
In the secret distribution phase, the sequence sent from Alice to $\mathrm{Bob}_i$
contains information particles and $B_{\mathrm{test}}$ test blocks. In the secret
reconstruction phase, $\mathrm{Bob}_i$ inserts $N$ decoy particles into the
particle sequence transmitted to Peter. To prevent eavesdropping, $N$ should be
chosen to be sufficiently large. In this case, we have
\[
N\gg B_{\mathrm{test}}.
\]
Hence,
\[
d-z<\sqrt{\frac{Nz}{1+B_{\mathrm{test}}}}.
\]
Therefore, $C_1$ decreases as $d$ increases, while the communication efficiency
$\eta$ increases as $d$ increases.

In particular, when $z=t-1$, the scheme reduces to a $(t,n)$ threshold scheme.
In this case, as $d$ increases, the costs $C_1$, $C_2$, and $C_3$ decrease,
whereas the communication efficiency $\eta$ increases.

This section analyzes the proposed scheme mainly from two aspects: communication
cost and communication efficiency. First, the amount of quantum communication is
used as the primary evaluation metric, and the quantum resources required in the
secret distribution phase, the detection phase, and the secret reconstruction
phase are discussed separately. This provides a more comprehensive
characterization of the overall communication overhead of the scheme. Second,
under different numbers of authorized participants involved in reconstruction,
we analyze the variation of the reconstruction communication cost. The results
show that, as the number of participants involved in reconstruction increases,
the combiner can obtain fewer quantum shares from each participant, thereby
reducing the quantum communication cost required per unit secret. Finally,
although the proposed scheme introduces a classical threshold sharing mechanism
and a detection mechanism, the additional overhead mainly consists of classical
communication and detection-related costs, and does not destroy the
communication efficiency of the original scheme in the quantum reconstruction
phase. Therefore, the analysis in this section shows that the proposed scheme
enhances security while still maintaining good quantum communication efficiency,
which indicates its potential practical value.

\section{Conclusion}

This paper has proposed a class of communication-efficient perfect quantum
secret sharing schemes to address the potential risk of intermediate-set
information leakage in existing ECSS-based communication-efficient Non-PQSS
schemes. Compared with the original schemes, the advantages of the proposed
construction are mainly reflected in the following three aspects.

First, this paper provides an intermediate-set detection method. By representing
the quantum shares held by a participant set as linear combinations of secret
variables and random variables, and by performing elimination analysis through
invertible linear transformations, one can determine whether certain participant
sets are able to obtain partial information about the secret. This method
reveals the concrete mechanism of intermediate-set leakage and provides a
theoretical basis for subsequent security enhancement.

Second, the proposed scheme preserves the low communication-cost advantage of
the original communication-efficient quantum secret sharing construction. The
scheme allows the combiner to choose different reconstruction strategies
according to the number of participants involved in secret recovery. When more
authorized participants participate in reconstruction, each participant only
needs to send a portion of his or her quantum share, thereby reducing the
quantum communication cost in the secret reconstruction phase.

Finally, this paper introduces a classical threshold sharing mechanism to
protect the quantum pre-encryption key. As a result, even if an intermediate set
obtains partial information about the encrypted quantum state, it cannot recover
the original quantum secret. Only authorized sets can simultaneously recover the
required quantum shares and the classical key, and hence complete the secret
reconstruction. Therefore, the proposed scheme transforms the original
non-perfect quantum secret sharing scheme into a perfect quantum secret sharing
scheme.

In summary, the proposed scheme simultaneously achieves enhanced security
against intermediate-set leakage, improved communication efficiency, and a
perfectness-preserving construction. It therefore provides a feasible approach
for designing secure and communication-efficient quantum secret sharing
protocols.

\section*{Acknowledgment}
{
This work is supported by the National Natural Science Foundation of China (NSFC) under Grant No.12471437.
}



\backmatter






\bibliography{sn-bibliography}

\end{document}